\newcommand{\progress}[2][]{}  
 \newtheorem{theorem}{Theorem}
 \newtheorem{lemma}[theorem]{Lemma}
 \newtheorem{corollary}[theorem]{Corollary}
 \newtheorem{proposition}[theorem]{Proposition}
\newenvironment{proofof}[1]{\vspace{2mm}\noindent\emph{Proof (of #1).}\enspace}{\qed\vspace{2mm}}
\newenvironment{proofsketch}{\noindent\emph{Proof sketch.}}{\qed}
\theoremstyle{definition}
\newtheorem{example}{Example}
\providecommand {\calA}      {{\mathcal A}\xspace}
\providecommand {\calB}      {{\mathcal B}\xspace}
\providecommand {\calD}      {{\mathcal D}\xspace}
\providecommand {\calF}      {{\mathcal F}\xspace}
\providecommand {\calI}      {{\mathcal I}\xspace}
\providecommand {\calJ}      {{\mathcal J}\xspace}
\providecommand {\calO}      {{\mathcal O}\xspace}
\providecommand {\calP}      {{\mathcal P}\xspace}
\providecommand {\calR}      {{\mathcal R}\xspace}
\providecommand {\calS}      {{\mathcal S}\xspace}
\newcommand{\bigO}{\ensuremath{\calO}}
\newcommand{\N}{\ensuremath{\mathbb{N}}}
\newcommand{\Z}{\ensuremath{\mathbb{Z}}}
\newcommand{\num}[1]{\ensuremath{[#1]}}
\newcommand{\numz}[1]{\ensuremath{[#1]_0}}
\newcommand{\NL}{\mbox{{\sf NL}}}
\newcommand{\ACz}{\mbox{{\sf AC}$^0$}}
\newcommand{\TC}{\mbox{{\sf TC}}}
\newcommand{\NCz}{\mbox{{\sf NC}$^0$}}
\newcommand{\DynTCz}{\mbox{{\sf DynTC$^0$}}}
\newcommand{\DynACz}{\mbox{{\sf DynAC$^0$}}}
\newcommand{\DynFO}{\mbox{{\sf DynFO}}\xspace}
\newcommand{\DynQF}{\mbox{{\sf DynQF}}\xspace}
\newcommand{\DynProp}{\mbox{{\sf DynProp}}\xspace}
\newcommand{\DynFOar}{\mbox{{\sf DynFO$(+,\!\times\!)$}}\xspace}
\newcommand{\DynFOarithmetic}{\DynFOar}
\newcommand{\SVrank}{\myproblem{SVRank}\xspace}
\newcommand{\Prank}{\myproblem{RankModp}\xspace}
\newcommand{\Reach}{\myproblem{Reach}\xspace}
\newcommand{\PMatching}{\myproblem{PerfectMatching}\xspace}
\newcommand{\MMatching}{\myproblem{MaxMatching}\xspace}
\newcommand{\TwoSat}{\myproblem{$2$-Sat}\xspace}
\newcommand{\reach}{\textsc{Reach}\xspace}
\newcommand{\rank}{\ensuremath{\text{rank}}}
\newcommand{\columnWidth}{10cm}
\providecommand{\nc}{\newcommand}
\newcommand{\mtext}[1]{\textsc{#1}}
\newcommand  {\myclass} [1]  {\ensuremath{\textsc{#1}}}
\nc{\dynac}{\ensuremath{\myclass{DynAC}^0}\xspace}
\nc{\adom}{\ensuremath{\text{adom}}}
\nc{\dom}{\ensuremath{\text{dom}}}
\nc{\subs}{\subseteq}
  \newcommand{\insertion}[2]{\textbf{insert}\ #1\ \textbf{into}\ #2}
  \newcommand{\deletion}[2]{\textbf{delete}\ #1\ \textbf{from}\ #2}
  \newcommand{\insertRule}[4]{
      \vspace{2mm}
      \hspace{1cm}\begin{tabular}{l}
          \textbf{on insert}\ #1\ \textbf{into}\ #2 \\
          \textbf{update}\  #3\  \textbf{as}\  #4
      \end{tabular}
      \vspace{2mm}
  }
\newcommand{\db}{\ensuremath{\calD}\xspace}
\newcommand{\inp}{\calI}
\newcommand{\aux}{\calA}
\newcommand  {\myproblem} [1] {\textsc{#1}}
\newcommand{\df}{\ensuremath{\mathrel{\smash{\stackrel{\scriptscriptstyle{
    \text{def}}}{=}}}} \;}
\newcommand{\struc}{\calS}
\newcommand{\state}{\struc}
\newcommand{\schema}{\tau}
\newcommand{\inpSchema}{\schema_{\text{in}}}
\newcommand{\auxSchema}{\schema_{\text{aux}}}
\newcommand{\relSchema}{\schema_{\text{rel}}}
\newcommand{\conSchema}{\schema_{\text{const}}}
\newcommand{\arity}{\ensuremath{\text{Ar}}}
\newcommand{\domain}{D}
\providecommand{\prog}{\ensuremath{\calP}\xspace}
\newcommand{\del}{\mtext{del}\xspace}
\makeatletter \newcommand{\uf}[4]{
  \@ifmtarg{#4}{
    \ensuremath{\phi^{#1}_{#2}(#3)}
   }{
    \ensuremath{\phi^{#1}_{#2}(#3; #4)}
  }
}
\newcommand{\acomment}[2]{\ \\ \fbox{\parbox{\linewidth}{{\sc #1}: #2}}}
\newcommand{\mcomment}[2]{{\color{blue}(#1)}\footnote{#1: #2}}
\nc{\sdm}[1]{\mcomment{SD}{#1}}
\nc{\rkm}[1]{\mcomment{RK}{#1}}
\nc{\amm}[1]{\mcomment{AM}{#1}}
\nc{\tsm}[1]{\mcomment{TS}{#1}}
\nc{\tzm}[1]{\mcomment{TZ}{#1}}
\nc{\sd}[1]{\acomment{SD}{#1}}
\nc{\rka}[1]{\acomment{RK}{#1}}
\nc{\am}[1]{\acomment{AM}{#1}}
\nc{\ts}[1]{\acomment{TS}{#1}}
\nc{\tz}[1]{\acomment{TZ}{#1}}
\newcommand{\problemIndent}{\hspace{5mm}}
\newcommand  {\querydescr} [3] {
    \vspace{3mm}
    \def\Name{#1}
    \def\Input{#2}
    \def\Question{#3}
     \begin{minipage}{3cm}
     \problemIndent\begin{tabular}{r p{\columnWidth}r}      \textit{Problem:} & \mbox{{\sf \Name}}\\
      \textit{Input:} & \Input \\
      \textit{Output:} & \Question
     \end{tabular}
     \end{minipage}

    \vspace{3mm}
    }
\newcommand  {\bquerydescr} [3] {
    \vspace{3mm}
    \def\Name{#1}
    \def\Input{#2}
    \def\Question{#3}
    \begin{minipage}{3cm}
      \problemIndent\begin{tabular}{r p{\columnWidth}r}      \textit{Problem:} & \mbox{{\sf \Name}}\\
                      \textit{Input:} & \Input \\
                      \textit{Question:} & \Question
                    \end{tabular}
                  \end{minipage}
    \vspace{3mm}
    }
 \newcommand{\mhat}[1]{\widehat{#1}}
\newcommand{\Eh}{\mhat{E}}
\newcommand{\restrict}{\upharpoonright}
    \let\@fnsymbol\@arabic
\author[1]{Samir Datta}
\author[2]{Raghav Kulkarni}
\author[1]{Anish Mukherjee}
\author[3]{Thomas Schwentick}
\author[3]{Thomas Zeume}
\affil[1]{Chennai Mathematical Institute, India \newline
          (sdatta,anish)@cmi.ac.in}
\affil[2]{Center for Quantum Technologies, Singapore \newline 
          kulraghav@gmail.com}
\affil[3]{TU Dortmund University\newline
(thomas.schwentick,thomas.zeume)@tu-dortmund.de}
\begin{document}
\sloppy

\title{
Reachability is in DynFO
}

\maketitle

\begin{abstract}
Patnaik and Immerman introduced the dynamic complexity class \DynFO of database queries
that can be maintained by first-order dynamic programs with the help of auxiliary
relations under insertions and deletions of edges
\cite{PatnaikI97}. This article confirms their conjecture that the
Reachability query is in \DynFO. 

As a byproduct it is shown that the rank of a matrix with small values
can be maintained
in \DynFOar. It is further shown that the (size of the)
maximum matching of a graph can be maintained in non-uniform \DynFO,
another extension of \DynFO, with non-uniform initialisation of the
auxiliary relations.
\end{abstract}

\pagenumbering{arabic} 

\section{Introduction}
\progress{30}
\label{sec:intro}

In many data management scenarios, data is subject to frequent
change. When a web server is
temporarily not available, data packages have to be rerouted
immediately; when a train is cancelled on short notice, travellers need to
find alternative connections as fast as possible.

Recomputation of a query after each small change
of the data is often not possible due to the large
amount of data at hand and efficiency considerations. Often it is
also not necessary: the loss of one server usually affects only a
small part of the network. Therefore it makes sense to consider
incremental algorithms that use previously computed auxiliary data
to answer queries faster, after a small change.

In this article, we do not study the dynamic scenario from the
point of view of incremental algorithms, but rather  from the point of
view of Descriptive Complexity (see \cite{Ibook}). More precisely, we
use the setting of Dynamic Complexity Theory as introduced by Patnaik
and Immerman   \cite{PatnaikI97}. Dynamic Complexity Theory has its roots in theoretical investigations of the view update problem
for relational databases.
In a nutshell, it investigates the logical complexity 
of updating the result of a query under deletion or insertion of  tuples into a
database. 

Besides possibly saving resources, a dynamic approach to query
answering can increase the expressivity of database query
languages. It is well-known that the relational algebra
 inherits the well-known
expressivity limitations from first-order logic. It basically can only
express local queries that do not count (see, e.g., \cite{Libkin04} for more
information on the limits of first-order logic on finite structures).
In the dynamic setting, first-order logic is more powerful:
as a simple example, whether the size of a set is odd or
even can be easily maintained under single insertion and deletion
operations  with the help of a single bit of auxiliary (stored) data.

Starting with work by Dong, Su, and Topor \cite{DongT92, DongS93}
(with the name first-order incremental evaluation systems, FOIES) as
well as Patnaik and Immerman \cite{PatnaikI97}, the power of
first-order logic as an update
mechanism has been studied over the last decades. In this line of
work, the result of a query is updated by first-order formulas that
have access to the current database and to an auxiliary database that
may contain helpful information. The relations of the auxiliary
database are updated by first-order formulas as well. Beyond
the expressive equivalence with the relational algebra, first-order
logic is also an interesting update language thanks to its
correspondence to low level circuit-based complexity classes: queries
maintainable by first-order updates can also be maintained 
by highly parallel algorithms. We refer to the
class of queries that can be updated by first-order formulas under
single tuple insertions and deletions by
$\DynFO$ as introduced in \cite{PatnaikI97}.

The reachability query returns, for a given graph $G$, all pairs
$(s,t)$ of nodes, for which there is a path from $s$ to $t$.
When investigating the expressive power of \DynFO,
the reachability query is of particular interest, since it is one of
the simplest queries that can not be expressed (statically) in
first-order logic, but rather requires recursion. Actually, it is
in a sense prototypical due to its connection to transitive closure
logic. The question whether the reachability query can be
maintained by first-order update formulas has been considered as one
of the central open questions in Dynamic Complexity. It has been studied for several restricted graph classes and variants of $\DynFO$ \cite{DattaHK14,DongS98,GraedelS12,Hesse03a,PatnaikI97,ZeumeS15}. In
this article, we confirm the conjecture of Patnaik and Immerman
\cite{PatnaikI97} that the reachability query for general directed
graphs is indeed in~\DynFO. 

\begin{theorem}
\label{thm:reach}
Reachability is in \DynFO. 
\end{theorem}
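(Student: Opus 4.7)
The plan is to maintain the pairwise-reachability (transitive closure) relation as part of the auxiliary data structure, updating it by first-order formulas after each edge insertion or deletion. Edge insertions are handled by the classical FO update for transitive closure: after inserting $(u,v)$, a pair $(x,y)$ is reachable iff it was reachable before, or if $(x,u)$ and $(v,y)$ were both reachable before. The genuinely hard case is edge deletion, and this is where the bulk of the work goes.

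For deletions, the difficulty is detecting when all witnesses of $(x,y)$ in the transitive closure went through the deleted edge $(u,v)$, a condition that is not directly FO in the previous transitive closure relation alone. To address this, I would enrich the auxiliary data with linear-algebraic fingerprints of the current graph: the rank of matrices derived from the adjacency matrix $A$ (such as $I - A$ over suitable fields), together with explicit bases certifying these ranks. These fingerprints should be strong enough that reachability between any two vertices can be read off, and they should update locally under single-entry modifications of $A$, which is precisely the effect of adding or removing a single edge; Sherman--Morrison-type identities guide such rank-one updates.

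The reduction from Reachability to Rank is the heart of the argument: using the first main theorem of the paper (Rank is in \DynFOar), rank maintenance is available in the setting with built-in arithmetic. Combined with a reduction expressing Reachability via a polynomial number of rank queries on $A$ and its single-entry modifications, this yields Reachability in \DynFOar as an intermediate step.

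The main obstacle, and the crux of the proof, is strengthening this to plain \DynFO, that is, eliminating the dependence on built-in arithmetic on the vertex universe. The key observation is that the numerical quantities needed (ranks, entries of rank witnesses, path counts modulo small primes) are all polynomially bounded in the number of vertices, so they can be encoded by set-theoretic relations over tuples of vertices rather than by genuine numbers. All required arithmetic is then carried out internally over these encodings and maintained by FO formulas, without ever invoking a built-in addition or multiplication on the universe. Making this encoding work uniformly under both edge insertions and deletions, and proving inductively that the combinatorial witnesses continue to encode the correct rank-theoretic invariants after each FO update, is the technical core of the proof.
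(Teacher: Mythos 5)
Your overall route --- reduce reachability to a rank computation on a matrix built from the adjacency matrix, invoke the theorem that rank is in \DynFOar, and then argue the arithmetic away --- is the same architecture as the paper's. But two steps are left in a state where they would not go through as written. First, the reduction itself is underspecified, and the one concrete matrix you name, $I-A$, does not work: it need not be invertible (already a single $2$-cycle gives $\det(I-A)=0$), and over a finite field path contributions can cancel. The paper uses $B=nI-A$, which is strictly diagonally dominant and hence invertible over $\Q$, and whose inverse has $(s,t)$-entry $\tfrac1n\sum_{i\ge 0} n^{-i}(A^i)_{st}$, a sum of \emph{nonnegative} terms that is nonzero exactly when a path exists; testing whether this entry is $0$ is then turned into a single rank condition ($\rank(B'\,|\,b')=n+1$ for an augmented system encoding $Bx=e_t$, $x_s=0$), and a single edge change alters a single matrix entry. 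Some choice with these properties is needed; ``$I-A$ over suitable fields'' together with ``path counts modulo small primes'' does not by itself give a sound criterion.

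The more serious gap is the passage from \DynFOar to \DynFO. You claim that since all quantities are polynomially bounded they ``can be encoded by set-theoretic relations over tuples of vertices'' and the arithmetic ``carried out internally.'' This begs the question: encoding numbers as tuples and defining $+$ and $\times$ on them in first-order logic requires a linear order on the universe, and having that order (with its arithmetic) at initialization is precisely what distinguishes \DynFOar from \DynFO. In \DynFO all non-activated elements are indistinguishable and cannot carry any ordering or auxiliary data. The paper closes this gap with a separate simulation theorem for \emph{domain-independent} queries: an order and arithmetic are built incrementally on the activated elements (Etessami's trick), and for each $m$ a delayed ``$m$-simulation'' of the \DynFOar program is run on a virtual universe $[m]^2$ encoded by pairs of activated elements, transferring a constant number of edges of the real graph per change step so that it is caught up by the time $(m-1)^2$ elements are activated; domain independence of reachability guarantees the answer on the smaller virtual universe agrees with the answer on $[n]$. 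Your proposal contains neither the restriction to domain-independent queries nor any mechanism for making arithmetic and auxiliary data available for elements before enough of the universe is activated, and without such a mechanism the final step fails.
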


The proof of Theorem~\ref{thm:reach} relies on a known reduction from
Reachability to the matrix rank query and a dynamic program for a
suitable restriction of the latter. This query, to which we refer
as \SVrank (short for \emph{rank for small valued matrices}, cf.\ Section~\ref{sec:preliminaries}), is defined on
quadratic matrices with integer values which are bounded
by the number of rows of the matrix.

More precisely, whether there is a path from a node $s$ to a
node $t$ in a graph $G$ can be reduced to the question whether a
certain matrix has maximal rank. Technically, the reduction thus
yields a collection of matrices, one for each pair $(s,t)$ of
nodes. It has the additional property that a single change in $G$
(deletion or insertion of an edge) only yields a single change in each
of these matrices. Furthermore, the reduction can use arithmetic in a
certain generic way. We formalise such reductions as \emph{bounded
  expansion first-order truth-table reductions} with arithmetic
(bfo$(+,\times)$-tt reductions). We refer to their arithmetic-free
version as bfo-tt reductions and show that \DynFO is closed under
bfo-tt reductions.

We show the following result.

\begin{theorem}
\label{thm:rank}
\SVrank is in \DynFO.
\end{theorem}

Due to the use of arithmetic in the reduction from the reachability
query to \SVrank, the immediate implication of Theorem~\ref{thm:rank}
and the reduction is that the reachability query is in \DynFOar, the
extension of \DynFO, in which dynamic programs can use an addition and
a multiplication relation on their domain from the very beginning of
the computation (unlike \DynFO programs, where the database/graph and
the auxiliary relations do not contain any tuples initially). However, we
show that with respect to  weakly domain independent queries (cf.\
Section~\ref{sec:preliminaries}), such as the reachability query, \DynFOar
and \DynFO coincide. We also show that for weakly domain independent queries the existence
of a \DynFOar program implies the existence of a FOIES.

By bfo-tt reductions to the reachability query it is not hard to show that satisfiability of 2-CNF
formulas and regular path queries for graph databases can also be
maintained in \DynFO. 

Finally, we show that two queries that deal with matchings in graphs
can be maintained in a non-uniform extension of \DynFO.

\begin{theorem}
\label{thm:matching}
  \PMatching and \MMatching are in non-uniform $\DynFO$.
\end{theorem}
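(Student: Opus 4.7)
The plan is to reduce \PMatching and \MMatching to the rank maintenance problem via the Tutte matrix and then apply Theorem~\ref{thm:rank}. For a graph $G$ on vertex set $[n]$, its Tutte matrix $T_G(\mathbf{x})$ is the $n \times n$ skew-symmetric matrix over the polynomial ring $\mathbb{Z}[\mathbf{x}]$ whose $(i,j)$-entry is $x_{ij}$ if $\{i,j\}\in E(G)$ and $i<j$, $-x_{ji}$ if $\{i,j\}\in E(G)$ and $i>j$, and zero otherwise. By a classical theorem of Lov\'asz, $\rank(T_G(\mathbf{x})) = 2\mu(G)$, where $\mu(G)$ denotes the size of a maximum matching; in particular a perfect matching exists iff this rank equals $n$ (for odd $n$, \PMatching is trivially false and \MMatching can be reduced to the even case by adding one dummy vertex fully connected to $G$).

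To turn this into a dynamic algorithm I would substitute concrete integer values for the indeterminates $x_{ij}$ and compute rank over a finite field $\mathbb{F}_p$ with a prime $p = \Theta(n)$. For any single fixed graph, Schwartz--Zippel guarantees that a uniformly random substitution from $\mathbb{F}_p$ preserves the polynomial rank with probability at least $1/2$. The core difficulty is that the substitution has to be fixed before the dynamic sequence starts, but it must work simultaneously for every subgraph of $K_n$ that may later arise, and there are $2^{\binom{n}{2}}$ such subgraphs, so a naive union bound fails catastrophically.

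To overcome this I would maintain $k = \Theta(n^2)$ Tutte-style matrices $M^{(1)}, \ldots, M^{(k)}$ in parallel, each built from an independently chosen random substitution $\mathbf{v}^{(\ell)} \in \mathbb{F}_p^{\binom{n}{2}}$. For a fixed subgraph $H$, the probability that \emph{all} $k$ substitutions fail is at most $2^{-k}$, and a union bound over at most $2^{\binom{n}{2}}$ subgraphs shows that for suitable $k = \Theta(n^2)$ there exist substitutions $\mathbf{v}^{(1)}, \ldots, \mathbf{v}^{(k)}$ with $\max_\ell \rank(M_H^{(\ell)}) = 2\mu(H)$ for every subgraph $H$ of $K_n$. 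I would hardwire these substitutions, the prime $p$, and the full addition and multiplication tables of $\mathbb{F}_p$ into the initial auxiliary structure; this is exactly the extra power that non-uniform \DynFO provides over plain \DynFO.

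Once the substitutions are fixed, each edge change in $G$ translates into changing at most two entries of each $M^{(\ell)}$, so Theorem~\ref{thm:rank} applied independently to each $M^{(\ell)}$ maintains its rank by first-order updates that use the hardwired arithmetic in place of the built-in $+,\times$ of \DynFOar. A final first-order post-processing takes the maximum of these ranks over $\ell$, divides by two to obtain $\mu(G)$, and compares the result with $n/2$ for \PMatching. The main technical obstacle is balancing two competing demands on $p$: it must be large enough for Schwartz--Zippel to succeed with non-negligible probability, yet small enough that matrix entries stay within the ``small integers'' regime required by Theorem~\ref{thm:rank}; choosing $p$ linear in $n$ and compensating by running $\Theta(n^2)$ independent copies resolves both constraints.
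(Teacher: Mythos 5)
Your proposal is correct, and it reaches the result by a genuinely different route in the one place where the paper has to work: the choice of the non-uniform advice. Both you and the paper reduce matching to rank via the Tutte matrix and Lov\'asz's identity $\rank(T_G)=2\mu(G)$, run $\Theta(n^2)$ substituted copies in parallel, maintain each rank modulo a small prime with the dynamic rank algorithm, and take the maximum. But where you substitute \emph{random field elements} and invoke Schwartz--Zippel (a fixed substitution preserves the rank of a fixed subgraph with probability $\ge 1/2$ over $\mathbb{F}_p$ with $p=\Theta(n)$, so a union bound over the $2^{\binom{n}{2}}$ subgraphs shows that some family of $\binom{n}{2}+1$ substitutions works for all of them), the paper substitutes $x_{ij}\mapsto 2^{w(i,j)}$ and invokes the Isolation Lemma of Mulmuley--Vazirani--Vazirani, deriving by the same kind of union bound (their ``non-uniform Isolation Lemma'', following Reinhardt--Allender) a family of $n^2$ weight assignments with weights in $[4n]$ such that every graph has an isolated minimum-weight maximum matching under one of them; the weighted analogue of Lov\'asz's theorem (their Theorem~9, extracted from Hoang) then gives $\rank(B_{G,w})=2\mu(G)$. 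Your route buys simplicity: the matrix entries are small by construction, a single prime suffices, and the advice is just the substitution table plus $\mathbb{F}_p$-arithmetic, whereas the paper must cope with entries of the form $2^{\Theta(n)}$ by working modulo polynomially many primes and hardwiring $2^{w_{ij}}\bmod p$ into the advice. What the paper's route buys is the connection to isolation, which it explicitly prefers over the Schwartz--Zippel approach of Sankowski because isolating weights can sometimes be constructed \emph{deterministically} on restricted graph classes (grid/planar graphs), offering a path to removing the non-uniformity that the purely probabilistic-existence argument does not. Two small remarks on your write-up: the direction $\rank(M_H^{(\ell)})\le 2\mu(H)$ needs the (routine) observation that every $(2\mu+1)$-minor of $T_H$ is the identically-zero polynomial over $\mathbb{Z}$, hence also over $\mathbb{F}_p$, so that evaluation can only lower the rank; and the dummy-vertex reduction for odd $n$ is unnecessary, since the rank formula holds for all $n$.
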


Here, \PMatching is the Boolean query that asks for the existence of a
perfect matching in a graph, whereas \MMatching returns the size of
the maximum matching (encoded as a singleton unary relation). Non-uniform
\DynFO is the extension of \DynFO whose programs can use arbitrarily pre-defined
auxiliary relations (similarly to non-uniform circuit families).

\paragraph{Related work}

All steps of the proof of Theorem~\ref{thm:reach} benefit from previous work.
The underlying correspondence between graph reachability and the inverse of
its adjacency matrix has been already used long ago  \cite{Cook85} and
the precise relationship between reachability and matrix rank used in
our proof has been already stated in \cite{Laubner11}. The algorithm
constructed for maintaining the rank of a matrix adapts a dynamic
sequential algorithm from \cite{FrandsenF09}.\footnote{We note that the
  conference version of this article exhibits a different algorithm
  for this problem.}  The third step extends the technique for maintaining arithmetic from \cite{Etessami98}.

Whether the reachability query can be maintained by first-order update
programs has been one of the main questions in
Dynamic Complexity. The positive results towards
the resolution of this question can be clustered in two groups: 
\begin{enumerate}[(1)]
\item results that show how to maintain \reach in
  extensions of \DynFO, and
\item results that show how to maintain \reach in \DynFO on
  restricted classes of graphs.
\end{enumerate}
In a sense, the former line of research has won this race, since the
methods developed there led to the maintenance algorithm presented here. 

The first result in group (1) was that, on arbitrary directed graphs, Reachability  can be
maintained in (uniform) \DynTCz\  \cite{Hesse03a}. The technique was based on
generating functions for representing the number of paths of a given
length from one node to another and on the observation that only
paths up to length $n$ have to be considered.  
In \cite{DattaHK14} this approach was extended to show that
Reachability can be even maintained in non-uniform $\DynACz[2]$. In terms of logic, $\DynTCz$ and $\DynACz[2]$ can be seen as the extensions of \DynFO
in which update formulas are allowed to use majority quantifiers and
modulo-2 counting quantifiers, respectively. In the latter paper also
the dynamic complexity of matrix rank was studied for the first time
(putting it in uniform \DynTCz).

Results of group (2) showed that the reachability query can be
maintained in \DynFO for undirected graphs \cite{PatnaikI94},
directed acyclic graphs \cite{DongS93}, and embedded planar graphs
\cite{DattaHK14}. For undirected graphs, reachability can even be
maintained in \DynQF (i.e., with quantifier-free formulas using auxiliary functions)
and for acyclic deterministic graphs even in \DynProp (i.e., with quantifier-free formulas
with auxiliary relations) \cite{Hesse03}.

In the case of undirected graphs, spanning trees \cite{PatnaikI94} or distance functions \cite{GraedelS12} can be
used. The result for directed acyclic graphs is based on a smart observation that allows to figure out whether
there is a path from $a$ to $b$ after deleting some edge $(c,d)$ \cite{DongS93}.

In a third line of research on Reachability, inexpressibility results have been obtained for fragments of $\DynFO$ \cite{DongS98,DongLW03,ZeumeS15}. 

As for maximum matching, in \cite{Sankowski04,Sankowski07} a reduction from maximum matching to matrix rank  has been used to construct a dynamic algorithm for maximum matching. While in this construction the inverse of the input matrix is maintained using Schwartz Zippel Lemma, we use the Isolation Lemma of Mulmuley, Vazirani and Vazirani's \cite{MulmuleyVV87} to construct a non-uniform $\DynFO$-algorithm for maximum matching.

\paragraph{Organisation}
In Section~\ref{sec:preliminaries} we fix our notation for databases,
queries and linear algebra. In Section~\ref{sec:framework} we define
the dynamic complexity framework and introduce bfo-tt
reductions. Section~\ref{sec:reachindynfo} is dedicated to the proof
that the reachability query is in \DynFO. That regular path queries
and 2-SAT are also in \DynFO is shown in Section~\ref{sec:applications}. Section~\ref{sec:matchings}
presents our results on graph matchings.
That the reachability query can also be maintained by FOIES is shown
in Section~\ref{sec:relatedsettings}. The conclusion is given in Section~\ref{sec:conclusion}.

\section{Preliminaries}
\progress{81}
\label{sec:preliminaries}
By $\num{n}$ we denote the set $\{1, \ldots, n\}$ and by $\numz{n}$ the set $\{0,1, \ldots, n\}$. 

In this article we are interested in the following algorithmic problems:

\querydescr{\reach}{A directed graph $G$}{Set of all pairs $(u,v)$, for which there is a path from $u$ to $v$ in $G$}

\querydescr{\MMatching}{An undirected graph $G$}{The size $k$ of a maximum matching of $G$}

\bquerydescr{\PMatching}{An undirected graph $G$}{Does $G$ have a perfect matching?}

A \emph{matching} $M$ of an undirected graph $G$ is a subset of pairwise non-adjacent edges of $G$. A node is matched by $M$ if it is the endpoint of one of the edges in $M$. A \emph{maximum matching} of $G$ (also: maximum-cardinality matching) is a matching that has the largest number of edges. A \emph{perfect matching} of $G$ is a matching that matches all vertices.

\subsection{Databases and Queries}

As much of the original motivation for the investigation of dynamic
complexity came from incremental view maintenance (cf.\ \cite{DongT92,DongS93,PatnaikI97}),
it is common to consider logical structures as relational
databases and to use notation from relational databases.

A \emph{(relational) schema} $\schema$ consists of a set $\relSchema$
of relation symbols, accompanied by an arity function $\arity:
\relSchema \rightarrow \N$, and a set $\conSchema$ of constant
symbols. In this work, a \emph{domain} is a finite set. A
\emph{database} $\db$ over schema $\schema$ with domain $\domain$
assigns to every relation symbol $R \in \relSchema$ a relation of
arity~$\arity(R)$ over $\domain$  and to every constant symbol $c \in
\conSchema$ an element (called \emph{constant}) \mbox{from
  $\domain$}. The \emph{active domain} $\adom(\db)$ of a database
$\db$ consists of those
elements of $\domain$ that either occur in some relation or as a
constant. 

A  $\schema$-\emph{structure} $\struc$ is a pair $(\domain, \db)$
where $\domain$ is a domain and $\db$ is a database with domain
$\domain$ over schema $\schema$. By $\dom(\struc)$ we refer to $\domain$.
For a relation symbol $R \in \schema$ and a constant symbol~$c \in \schema$ we denote by $R^\state$ and $c^\state$ the relation and constant, respectively, that are assigned to those symbols in~$\state$. 

The distinction between structures and databases will be relevant in
the dynamic complexity framework, since there the domain $\domain$
will be static, whereas the database and its active domain might
change. However, we often do not keep the two formalisms too much
apart and, e.g.,  refer by ``database'' to the corresponding
structure, in cases where the domain is given by the context (or not important).

Often structures come with special arithmetic relations $<$, $+$, and
$\times$ that are interpreted by a linear order on the domain, its
induced addition and its induced multiplication relation. When a
linear order is present, we often identify the elements of $\domain$
with the numbers in $\{0,\ldots,|\domain| -1\}$. A pair $(a, b) \in \domain
\times \domain$ then represents the number $a\times |\domain| + b$,
where in the latter term $a$ denotes the number with which the element
$a$ is identified. Likewise for tuples of higher arity. It is well known that from
$<$, $+$, and $\times$, arithmetic for tuples can be defined in
first-order logic.

A \emph{$k$-ary query} $q$ on $\schema$-structures is a mapping that is closed under isomorphisms and assigns a subset of $\domain^k$ to every $\schema$-structure over \mbox{domain $\domain$}. The problems \reach, \MMatching and \PMatching can be represented as binary, unary and boolean queries, respectively, on graph structures, i.e.~$\{E\}$-structures where $E$ is a binary relation. For example the query representing $\reach$ maps a graph structure to a binary relation that contains the transitive closure of the graph. The query representing \MMatching assumes arithmetic relations to be present and maps a graph structure to the set $\{m\}$ where $m$ is the size of a maximum matching of $G$.

We write $R\restrict A$ for the restriction of a relation $R$ to
tuples over the set $A$, and $\db \restrict A$ for the database
resulting from \db by restricting all relations to tuples over $A$.

A query $q$ is \emph{weakly domain independent}, if \mbox{$q(\db) \restrict \adom(\db) = q(\db \restrict \adom(\db))$}, for all databases
$\db$.

\subsection{Linear Algebra and Matrices}
\label{section:linearalgebra}
By $A[i,j]$ we refer to the entry in the $i$-th row and $j$-th column
of a matrix~$A$. Similarly, $x[i]$ denotes the $i$-th entry of vector $x$.  By
$e^{(n)}_i$ we denote the $n$-dimensional unit (column) vector $e$
with $e[i]=1$ and $e[j]=0$ for~\mbox{$j\not=i$}. We write $x^\top$ if
we use vector $x$ as a row vector. 

The rank and the determinant of a matrix~$A$ are denoted by $\rank(A)$
and $\det(A)$, respectively. For a prime number $p$, we denote by
$\rank_p(A)$ the rank of $A$ as a matrix over $\Z_p$ (and with entries
adjusted modulo $p$).

To the best of our knowledge, computational linear algebra problems
like matrix rank and matrix inverse have not been studied  before in dynamic complexity (with the notable exception of
Boolean matrix multiplication in~\cite{HesseI02}). Therefore, there is
no standard way of representing the matrix rank problem in the
dynamic complexity framework. The key question is how to represent the
numbers that appear in a matrix. We use a
representation that does not allow matrices with large numbers but suffices for
our applications in which matrix entries are not (much) larger than the
number of rows in the matrix. 

We say that an $(m\times m)$-matrix $A$ over $\Z$ has \emph{small values},
if for each $i,j\in\{1,\ldots,m\}$, $|A[i,j]|\le m$.  

\querydescr{\SVrank}{$(m\times m)$-matrix $A$ with small values}{Rank of $A$}

For this query, $(m\times m)$-matrices $A$  are represented by
structures as follows. The domain of the structure contains $m+1$
elements. There is a linear order $<$ that enables us to identify the
universe with $\numz{m}$. 
There are compatible $+$- and
$\times$-relations as well. Furthermore, they have two ternary relations $A_+,A_-$ to represent
the entries of $A$. That $A[i,j]=a$, for $a\in\{1,\ldots,m\}$ is represented by a triple $(i,j,a)$
in $A_+$. Similarly, if $A[i,j]=a$, for $a\in\{-m,\ldots,-1\}$, there
is a triple $(i,j,a)$ in $A_-$.  For each $i,j$, there is at most one triple $(i,j,a)$ in
$A_+\cup A_-$. If, for some $i,j$ there is no such triple $(i,j,a)$ then $A[i,j]=0$.

The query result of \SVrank is a unary relation $Q$ that is supposed
to contain a unique element $r$, the rank of $A$.

Change operations might insert a triple $(i,j,a)$ to $A_+$ or $A_-$ (in case, no
$(i,j,b)$ is there), or delete a triple. That is,
basically, single matrix entries can be set to 0 or from 0 to some
other value. However, the relations $<$, $+$, and $\times$ cannot be changed.

\section{Dynamic Complexity}
\progress{81}
\label{sec:framework}

In this section we recall the dynamic complexity framework. 

\subsection{Dynamic programs and \DynFO}

Inputs in dynamic complexity are represented as relational structures as defined in Section~\ref{sec:preliminaries}.  The domain is fixed from the beginning, but the database in the initial structure is empty. This initially empty structure is then modified by a sequence of insertions
and deletions of tuples. 

The goal of a dynamic program is to answer a given query for the database that results from any change sequence. To this end, the program can use an auxiliary 
data structure represented by an auxiliary database over the same domain. Depending on the exact setting, the auxiliary
database might be initially empty or not.

We make this more precise now, closely following the exposition in \cite{SchwentickZ16}. A dynamic program $\prog$ works on an \emph{input structure} $\inp$ over a schema $\inpSchema$ and updates an \emph{auxiliary structure} $\aux$ over a schema\footnote{To simplify the exposition, we will usually not mention schemas explicitly and always assume that all structures we talk about are compatible with respect to the schemas at hand.} $\auxSchema$. Both structures $\inp$ and $\aux$ share the same domain $D$ which does not change during a computation. We call a pair $(\inp, \aux)$ a \emph{state} and consider it as one relational structure. The relations of $\inp$ and $\aux$ are called \emph{input and auxiliary relations}, respectively. 

The input structure can be \emph{changed} by inserting or deleting a single tuple. A \emph{change operation} is thus of the form $\insertion{\vec t}{R}$ or $\deletion{\vec t}{R}$, for some tuple $\vec t$ and input relation $R$.  For a sequence $\alpha$ of change operations and an input database $\inp$, we denote the structure resulting from applying $\alpha$ to $\inp$ by $\alpha(\inp)$.

A \emph{dynamic program} has a set of update rules that specify how auxiliary relations are updated after a change. An \emph{update rule} for updating an auxiliary relation $T$ after inserting a tuple into an input relation $R$ is of the form

    \insertRule{$\vec x$}{$R$}{$T(\vec y)$}{$\varphi(\vec x, \vec y)$}

where the formula $\varphi$ is over $\inpSchema \cup \auxSchema$. We call $\varphi$ the \emph{update formula} for $T$ under insertions into $R$. 
The semantics of such an update rule is as follows.  When a tuple $\vec a$ is inserted into  input relation $R$, then the new state $\state$ of $\prog$ is obtained by inserting $\vec a$ into $R$ and by defining each auxiliary relation $T$ via $T \df \{ \vec b \mid (\inp, \aux) \models \varphi(\vec a, \vec b)\}$. Similarly for deletions. 
For a change operation $\delta$ we denote the updated state by $\prog_\delta(\state)$, and similarly for sequences of changes.
We refer to the insertion or deletion of a tuple together with the update operations applied by $\prog$ as a \emph{change step}.

The dynamic program $\prog$ \emph{maintains} a $k$-ary query $q$ if it has a $k$-ary auxiliary relation $Q$ that, after each change sequence, contains the result of $q$ on the current input database. More precisely, for each non-empty\footnote{This technical restriction ensures that we can handle, e.g., Boolean queries with a yes-result on empty structures. Alternatively, one could use an extra formula to compute the query result from the auxiliary (and input) structure.} sequence $\alpha$ of changes and each empty input structure $\inp_\emptyset$, relation $Q$ in $\prog_\alpha(\state_\emptyset)$ and $q(\alpha(\inp_\emptyset))$ coincide. Here, $\state_\emptyset=(\inp_\emptyset, \aux_\emptyset)$, where $\aux_\emptyset$ denotes the empty auxiliary structure over the domain of $\inp_\emptyset$. 

The class of queries that can be maintained by a dynamic program with update formulas from first-order logic is called $\DynFO$.

Several dynamic settings  have been studied in the literature (see
e.g. \cite{PatnaikI97,Etessami98,GraedelS12,GeladeMS12}). Here, we concentrate on the following three dynamic complexity classes:
\begin{itemize}
\item \DynFO is the class of all dynamic queries that  can be maintained by dynamic
  programs with formulas from first-order logic starting from an empty database and empty auxiliary relations.
\item $\DynFOar$ is defined as \DynFO, but the programs have three
  particular auxiliary relations that are initialised as a linear
  order and the corresponding addition and multiplication
  relations. There might be further auxiliary relations, but they are
  initially empty.
\item \emph{Non-uniform} $\DynFO$ is defined as $\DynFO$, but the auxiliary
  relations may be initialised arbitrarily.
\end{itemize}

It is well known that first-order logic with arithmetic is as powerful as uniform $\ACz$-circuits \cite{BarringtonIS90}. This correspondence naturally transfers to the dynamic setting. That is, a query can be maintained in $\DynFOar$ if and only if it can be maintained by uniform $\ACz$-circuits.

\newcommand{\para}[1]{\overrightarrow{#1}}
\newcommand{\paraB}{\ensuremath{\para{\calB}}\xspace}

\subsection{Logical truth-table reductions}
\progress{81}

In this article, we use a more general notion of reductions between queries than the bounded-expansion first-order reductions (bfo-reductions) in \cite{PatnaikI94}. They basically compare to bfo-reductions like truth-table reductions relate to many-one reductions in Complexity Theory.

The rough idea is to reduce the computation of a query $q$ on a structure $\struc$ to the computation of a query $q'$ by
\begin{enumerate}
\item defining  in a first-order fashion, from $\struc$, a collection of structures of the form
  $\calJ(\struc,\vec a)$, one for each
  tuple $\vec a$ of some arity $m$ over the domain of~$\struc$, 
\item combining all query results $q'(\calJ(\struc,\vec a))$ into a structure
  $\struc'$, and
\item defining $q(\struc)$ from $\struc$ and $\struc'$ by a first-order formula. 
\end{enumerate}

A technical complication arises from the fact that we need to allow the structures $\calJ(\struc,\vec a)$ to be defined not over the domain of $\struc$ but over some Cartesian product over this domain. Thus, we have to deal with two ``dimension parameters'': $d$ will denote the dimension of the domain of the structures $\calJ(\struc,\vec a)$ and $m$ will denote the arity of the tuples $\vec a$.

Let in the following $\sigma,\tau$ be relational schemas.

An \emph{interpretation} $\calJ$ of dimension $d$ and arity $m$ from databases with schema $\sigma$ to databases with schema $\tau$ consists of
\begin{itemize}
\item a $\sigma$-formula $\varphi_D(\vec x,\vec y)$, and
\item $\sigma$-formulas $\varphi_R(\vec x_1,\ldots,\vec x_{\arity(R)},\vec y)$, for every $R\in\tau$,
\end{itemize}
where $\vec y=y_1,\ldots,y_m$, $\vec x=x_1,\ldots,x_d$ and, for every $j$, $\vec x_j=x_{j1},\ldots,x_{jd}$.

For every (finite) $\sigma$-structure $\struc$ and each $m$-tuple $\vec a$ over the domain of~$\struc$, the interpretation $\calJ$ defines a structure $\calJ(\struc,\vec a)$ with
\begin{itemize}
\item domain
  $D^{\vec a}\df\{\vec b \in A^d \mid \calA \models \varphi_D(\vec
  b,\vec a)\}$, and 
\item relations
  \[
R^{\vec a}\df\{(\vec b_1,\ldots,\vec b_{\arity(R)})\in (D^{\vec a})^{\arity(R)} \mid
  \calA \models \varphi_i(\vec b_1,\ldots,\vec b_{\arity(R)},\vec a)\},
\]
  for every $R$.
\end{itemize}

A \emph{first-order truth-table query-to-query reduction}  (fo-tt-reduction) \mbox{$\calR=(\calJ,\varphi)$} from $q$ to $q'$ consists of an interpretation $\calJ$ and a formula $\varphi$ with free variables $x_1,\ldots,x_k$, where $k$ is the arity of $q$, which fulfils the following reduction property.

For every (finite) structure $\struc$, $q(\struc)$ is the set $\{\vec t \mid \struc' \models \varphi(\vec t)\}$, where the structure $\struc'$ with domain $\dom(\struc)$ 
is defined as follows.
 Let $d$ be the dimension of $\calJ$, $m$ its arity, $\ell$ the arity of $q'$, and $\sigma, \tau$ the schemas of $\calJ$. 
\begin{itemize}
\item $\struc'$ has all relations from $\struc$;
\item Furthermore, $\struc'$ has a relation $\hat{Q}$ of arity $m+d\ell$ that contains all tuples of the form $(\vec a, \vec s)$, where $\vec a\in U^d$ and $\vec s\in q'(\calJ(\struc,\vec a))$. In $(\vec a, \vec s)$ the $\ell$-tuple $s$ over universe $U^d$ is considered a $d\ell$-tuple in the obvious way.
\end{itemize}

We refer to $\varphi$ as the \emph{wrap-up formula} of the reduction.

In analogy to \cite{PatnaikI94}, we say that an interpretation $\calJ$ has \emph{bounded expansion} if there is a constant \emph{expansion bound} $c$ such that for all structures $\struc_1,\struc_2$ over the same domain $D$, which differ by exactly one tuple, and for every tuple $\vec a$ over $D$, the databases $\calJ(\struc_1,\vec a)$ and  $\calJ(\struc_2,\vec a)$ differ by at most $c$ tuples. A fo-tt reduction has bounded expansion if its underlying interpretation has bounded expansion. We refer to fo-tt reductions with bounded expansion as bfo-tt reductions.

\begin{example}\label{ex:bfott}
As an illustrating example we show how the well-known reduction from \TwoSat   to \Reach can be cast as a bfo-tt reduction. The Boolean query \TwoSat  asks whether a given propositional formula in $2$-CNF has a satisfying assignment. Here, a propositional formula is in $2$-CNF if it is in conjunctive normal form and each clause contains at most two literals.

Instances of \TwoSat can be represented as structures as follows. The domain of a structure representing a formula $\varphi$ is the set of variables of $\varphi$. The clauses of $\varphi$ are represented by three binary input relations $C_{TT}$, $C_{TF}$ and $C_{FF}$ such that a tuple $(x,y) \in C_{TT}$ corresponds to a clause $x \vee y$, a tuple $(x,y) \in C_{TF}$ to a clause $x \vee \neg y$, and a tuple $(x,y) \in C_{FF}$ to a clause $\neg x \vee \neg y$. 

Thus, insertion and deletion of tuples corresponds to insertion and deletion of clauses in a natural way.

Intuitively, the reduction maps a $2$-CNF-formula $\theta$ with variables $V$ to the graph $G = (V \cup \overline{V}, E)$ where $\overline{V} = \{\neg x \mid x \in V\}$ and
$E$ contains the edges $(\neg L, L')$ and $(\neg L', L)$ if $L \vee L'$ is a clause in~$\theta$. It can be easily seen that $\theta$ is satisfiable if and only if there is no
variable $x \in V$ such that there are both a path from $x$ to $\neg
x$ and a path from $\neg x$ to $x$ in $G$.

More formally, the graph $G$ will be encoded over the set of pairs over~$V$. For two variables $u\not=v$ from $V$, a pair $(x,u)$ will represent $x$ and $(x,v)$ will represent $\overline{x}$.  This can be achieved by a 2-dimensional\footnote{We emphasise that the reduction constructions one graph for each pair $(u,v)$ with $u\not= v$. For simplicity, we ignore the case $u=v$ in the following. Graphs defined with these parameters do not contribute to the success of the reduction. We likewise ignore $2$-CNF-formulas with only one variable.}   interpretation $\calJ$ of arity 2. For each pair $(u,v)$ of variables, $\calJ(\theta,u,v)$ is the graph defined as above with $u$ and $v$ indicating positive and negated literals, respectively.  Thus, the formula $\varphi_D((x_1,x_2),(y_1,y_2))$ could be chosen as $(x_2=y_1)\lor(x_2=y_2)$ allowing only pairs in the domain of $\calJ(\theta,u,v)$ whose second entry is one of the parameters given by $y_1$ and $y_2$. The formula $\varphi_E((x_{11},x_{12}),(x_{21},x_{22}),(y_1,y_2))$ can be chosen as
$$
\big((C_{TT}(x_{11},x_{21})\lor C_{TT}(x_{21},x_{11}))\land (x_{12}=y_2) \land (x_{22}=y_1)\big) \lor \cdots
$$ 
with respective subformulas for $C_{TF}$ and $C_{FF}$.
Finally the wrap-up formula $\varphi$ can be chosen as
$$
\exists u,v (u\not=v) \land \neg \exists x (\hat{Q}((x,u),(x,v),(u,v))\land \hat{Q}((x,v),(x,u),(u,v))).
$$

Since the
modification of a single clause in $\theta$ induces only two
first-order definable modifications to the edge set of each
corresponding graph, the reduction is also bounded.
\qed
\end{example}

The relevance of bfo-tt reductions for this article stems from the following property. 

\begin{proposition}\label{prop:bfott}
  \DynFO is closed under bfo-tt reductions, that is, if there is a bfo-tt reduction from a query $q$ to a query $q'$ and $q'\in\DynFO$, then $q\in\DynFO$.
\end{proposition}
\begin{proofsketch}
 Let $(\calJ,\varphi)$ be a bfo-tt reduction from $q$ to $q'$ with expansion bound $c$. Let $\sigma,\tau$ be the schemas of $q$ and $q'$, respectively, and let $d$ be the dimension and $m$ the arity of $\calJ$. Let $\calP'$ be a dynamic program for $q'$. 
 
 The program $\calP'$ can be turned into a dynamic program $\calP$ for $\sigma$-structures that has one auxiliary relation $R$ of arity $m+d \arity(R')$, for every (input and auxiliary) relation $R'$ of $\prog'$. For each $m$-tuple $\vec a$, the program $\prog$ simulates the behaviour of $\prog'$ on  $\calJ(\struc,\vec a)$, independently. Since each change operation for $\struc$ translates into a sequence of at most $c$ change operations for $\calJ(\struc,\vec a)$, this amounts, for every tuple $\vec a$, to a sequence of at most $c$ update operations of~$\prog'$, which can be applied successively (but in parallel for different $\vec a$).

Since the query relation of $\prog'$ is one of its auxiliary relations, $\prog'$ has, in particular, the relation $\hat{Q}$ from the reduction property above available, and can therefore compute $q(\struc)$ in a first-order manner. 
\end{proofsketch}

We actually use slightly more powerful logical reductions, when we work with $\DynFOar$. We define bfo$(+,\times)$-tt reductions in almost the same way as bfo-tt reductions, but they assume in $\struc$ distinguished relations $<$, $+$ and $\times$ representing arithmetic on the universe. In such a reduction, the query $q$ must not depend on the choice of $<$, $+$ and $\times$, but $\calJ(\struc,\vec a)$ of course can. By an adaptation of the proof of Proposition~\ref{prop:bfott}, it can be shown that if there is a bfo$(+,\times)$-tt reduction from a query $q$ to a query $q'$, then $q\in\DynFOar$ if $q'\in\DynFO$. Similar reductions can be defined for non-uniform $\DynFO$ as well.

\section{Reachability is in DynFO}                                  \progress{50}
\label{sec:reachindynfo}
The goal of this section is to prove the main result of this article.

\begin{theorem}\label{theo:reachindynfo}
  $\reach\in\DynFO$.
\end{theorem}

The proof of Theorem~\ref{theo:reachindynfo} consists of four
relatively simple steps and it involves one additional query, \Prank, that will
be defined later.
\begin{enumerate}[(1)]
\item \Reach can be reduced to \SVrank by a bfo$(+,\times)$-tt reduction.
 \item \SVrank  can be reduced to \Prank by a bfo-tt reduction.
 \item  $\Prank\in\DynFO$.
\item For every weakly domain independent query $q$, if $q\in\DynFOarithmetic$
  then  $q\in\DynFO$. 
\end{enumerate}

From (1)-(3) it follows that $\Reach\in\DynFOar$. From (4)
we can conclude that $\Reach\in\DynFO$, since \Reach is
  weakly domain independent.

All four steps build to some extent on previous work. 
The basic correspondence between graph reachability and the inverse of
its adjacency matrix has been already used long ago  \cite{Cook85} and
the precise relationship between reachability and matrix rank used in
our proof has been already stated in \cite{Laubner11}. The algorithm constructed for the third step adapts
a dynamic sequential algorithm for maintaining rank from
\cite{FrandsenF09}. The last step extends the technique for
maintaining arithmetic presented in \cite{Etessami98}. 

In the following we describe the four steps separately and largely self-contained.

\subsection{From Reachability to Matrix Rank}

\label{sec:reduction}

Towards the reduction from \Reach to \SVrank, let $G$ be a graph with $n$ vertices and $A_G$ its adjacency
matrix, and let $s,t$ be vertices of~$G$. The important observation (which can be found, e.g., in
\cite[Theorem 6.1.10.]{HornJ12}) is that $I - \frac{1}{n}A_G$ is invertible and 
\[ (I - \frac{1}{n}A_G)^{-1} = I +
\sum_{i=1}^\infty{(\frac{1}{n}A_G)^i}.
\]

From the sum on the right hand side it can be easily concluded that
the inverse of $I
- \frac{1}{n}A_G$ has a non-zero entry at position $(s,t)$ if and only if
$t$ is reachable from $s$. 

For technical reasons, we prefer to deal with integer matrices and
therefore rather work with the matrix $B\df nI-A_G$, which is also
invertible.

Then the following chain of equivalences holds.
\begin{align*}
  \label{eq:1}
  \text{$t$} & \text{ is reachable from $s$ } \\
&\iff
                                                        (B^{-1})[s,t]\not=0\\
 &\iff (B^{-1} e^{(n)}_t)[s]\not=0\\
 &\iff \text{$x[s]\not=0$ holds for the vector $x=B^{-1} e^{(n)}_t$}\\
 &\iff \text{the equation $Bx=e^{(n)}_t$ has \emph{no} solution vector $x$ with
          $x[s]=0$}\\
 &\iff \text{the system
          $\begin{array}{rcl}
            Bx&=&e^{(n)}_t\\(e^{(n)}_s)^\top x & = & 0
          \end{array}$
has no solution vector $x$ at all}\\
&\iff \text{$e^{(n+1)}_t$ is not in the column space
of $B^{+{s}}$}\\
&\iff \text{$B^{+{st}}$ has rank $n+1$}
\end{align*}

Here, $B^{+{s}}$ denotes the \mbox{$((n+1)\times n)$}-matrix that is
obtained from $B$ by adding an
additional row $(e^{(n)}_s)^\top$, and $B^{+{st}}$ denotes the
extension of $B^{+{s}}$ by the additional column vector~$e^{(n+1)}_t$.

The latter equivalence holds since $B$ is invertible, and
thus $B$ and  $B^{+{s}}$ have rank $n$.

We next describe, how the above equivalence gives rise to a
bfo$(+,\times)$-tt reduction from \Reach to \SVrank. To this end we observe that, for graphs with $n$ vertices the resulting
($(n+1)\times(n+1)$-matrix $B^{+st}$ has small values, in the sense
defined in Section~\ref{sec:preliminaries}.  

It is easy to check that, in the presence of arithmetic,
$B^{+{st}}$ can be obtained from $G$ by a 2-dimensional and binary  bfo$(+,\times)$-tt
reduction $(\calJ,\varphi)$. For each database $\db$ (representing a graph $G$), and each pair $(s,t)$ over the universe $U$ of
$\db$, $\calJ(\db,(s,t))$ is a database that encodes $B^{+{st}}$.  The interpretation $\calJ$ uses two dimensions because the universe
representing $B^{+{st}}$ is of size $n+2$ for graphs with $n$
vertices.

For every pair $(s,t)$, the result relation
$\SVrank(\calJ(\db,(s,t)))$ is the set $\{(r)\}$, where $r$ is the
rank of $B^{+{st}}$. Therefore, the relation $\hat{Q}$ consists of all
triples $(s,t,r)$, for which $r$ is the
rank of $B^{+{st}}$.
Thus, for each $(s,t)$, the wrap-up formula $\varphi$ only needs to check whether $r=n+1$, where
$n$ is the number of nodes of $G$.

Finally, each change in $G$ results in only one change in $B^{+{st}}$,
for every~$(s,t)$, and therefore the reduction actually has expansion
bound 1.

\subsection{From Rank to Rank mod $p$}

\label{sec:rank2rankp}
Even though input matrices for $\SVrank$ have small entries, the maintenance algorithm on which the \DynFO-program for $\SVrank$ will be based needs to deal with matrices that have large entries. To avoid the complications that arise from the need to compute with large numbers, we show next that, in order to maintain the rank of a matrix $A$ with small values, it suffices to maintain its rank over the field $\Z_p$, for sufficiently many primes~$p$. We denote this rank by $\rank_p(A)$. Formally, this gives rise to a reduction from $\SVrank$ to the following query\footnote{As a dynamic problem, we do \emph{not} consider operations that change $p$. However, \Prank can also be maintained under these operations by simply always maintaining the rank over $\Z_p$, for every prime $p\le m^2$. }.

\querydescr{\Prank}{$(m\times m)$-matrix $A$ with values from
  $\{0,\ldots,p-1\}$, \newline prime $p\le m^2$}{Rank of $A$ over $\Z_p$}

The bound $m^2$ might appear a bit arbitrary, but we will see that it just suffices. 

The reduction from \SVrank  to \Prank is actually pretty simple.
It is based on the fact that, for large enough $m$, it suffices to maintain the rank for all primes $p \leq m^2$, as we will argue next. 
These primes indeed suffice thanks to the simple observation that $\rank(A) \geq k$ if and only if $\rank_p(A) \geq k$ for some prime $p \leq m^2$. Clearly, $\rank(A) \geq k$ if and only if there is some $k \times k$ submatrix $A'$ of $A$ with $\det(A')\not=0$. Since $\det(A')$ is bounded by $m!m^m$, its binary representation has $\bigO(m\log m)$ digits (for sufficiently large~$m$). Since, for large enough $n$,  there are more than $\frac{n}{\ln n}$ prime numbers between 1 and $n$ \cite{RosserB62}, there are more than $\frac{m^2}{2\ln m}$ prime numbers below $m^2$ and thus their product is clearly larger than $m!m^m$. Therefore, \mbox{$\det(A')\not=0$} if and only if there exists a prime  $p\le m^2$ such that ${\det(A') \not\equiv 0} (\,\text{mod}\, p)$. Thus, for large enough $m$, $\rank(A) \geq k$ if and only if there exists a prime  $p\le m^2$ such that $\rank_p(A) \geq k$.

The actual reduction $(\calJ,\varphi)$ from \SVrank  to \Prank is 1-dimensional and binary. Each pair $\vec i= (i_1,i_2)$ over $m$ is interpreted as the number $n(\vec i)\df (i_1-1)m+(i_2-1)$. For a database \db representing an input matrix $A$ for \SVrank and each pair $\vec i= (i_1,i_2)$, $B^{\vec i}\df \calJ(\db,\vec i)$ is the all-zero matrix if $n(\vec i)$ is not a prime. If $n(\vec i)$ is some prime $p$, then $\calB^{\vec i}$ represents the matrix $A$ over $\Z_p$.  Whether a number $n(\vec i)$ is a prime number can be tested by a first-order
formula thanks to the availability of $\times$. 

The wrap-up formula $\varphi$ simply computes the maximum $k$, such that for some prime $p=n(\vec i)$, the result relation for $\Prank(\calJ(\db,\vec i))$ contains~$k$.

\subsection{Maintaining Rank mod $p$ in DynFO}
\label{sec:rank}

In this subsection, we show that the rank of a matrix modulo some prime $p$ can be maintained in $\DynFO$. This is the most important intermediate result for Theorem
\ref{theo:reachindynfo} and interesting also in its own right. First
we give an informal description of the dynamic algorithm for matrix
rank.  Afterwards we describe how it can be transformed into a \DynFO
program. In the following we fix $m>0$,  a prime $p<m^2$, and only
consider matrices with small values.

The algorithm is an adaptation of a dynamic algorithm that has been stated in \cite{FrandsenF09}.
The idea is to maintain an invertible matrix $B$ and a matrix $E$ in
reduced row-echelon form such that $BA=E$. That $E$ is in reduced
row-echelon form means that
\begin{itemize}
\item the left-most non-zero entry (the \emph{leading entry}) in every row is 1,
\item the column of such a leading entry only contains zero-entries otherwise, and
\item rows are sorted in a ``diagonal'' fashion, that is, the column
  numbers of leading entries strictly increase with the row number.
\end{itemize}
 
The rank of $A$ equals the number of non-zero rows of $E$ thanks to $\rank(E)=\rank(B A)=\rank(A)$ and the structure of $E$. Thus maintaining the matrices $B$ and $E$ suffices to maintain the rank of $A$.
  
We describe next, how those matrices can be maintained after a
change of $A[i,j]$, for any $i,j\le m$.  All computations of matrix
entries are modulo~$p$. Let $A'$
denote the new value of matrix $A$ after this change. We explain,
how new matrices $B'$ and $E'$ can be obtained such that $B'A'=E'$.

After a change of $A[i,j]$, the product
$BA'$ differs from $BA$ at most in column $j$. Thus, to get the
desired matrix $E'$ in reduced echelon form, we can proceed as
follows. 

\begin{enumerate}[(1)]
  \item If column $j$ has more than one  leading entry  of $BA'$:
    \begin{itemize}
    \item let some entry with a maximum number of successive zeros in
      its row (right after column $j$) be the new leading entry,
    \item set this leading entry to 1, and set all other entries of column $j$ to~$0$, by appropriate
      row operations.
    \end{itemize}
 \item If a former leading entry of a row $k$ is lost in column $j$ (by the change in $A$ or
    by step (1)),
    \begin{itemize}
    \item set its new leading entry (i.e., the next non-zero entry in
      row $k$ and some column $\ell>j$) to 1 and  set all other
      entries of column $\ell$ to $0$, by appropriate
      row operations.\footnote{Since all other columns with  leading
        entries have only one non-zero entry, and row $k$ has no
        non-zero entries before column $\ell$, these row operations do
        not do any harm to the echelon structure of the rest of the matrix.}
   \end{itemize}
 \item If needed: move the (at most two) rows, for which the position
   of the leading entry has
   changed (compared with $E$)  to their correct (row)
    positions.
\end{enumerate}
An illustrating example can be found  in Figure~\ref*{fig:rank}. 
\begin{figure}[h!]
\centering\footnotesize

\begin{tabular}[c]{cccccc}
& $B$ & & $A$ & & $E$\\
& \begin{tikzpicture}[baseline]
\matrix (u) [matrix of nodes,left delimiter=(,right delimiter=)] 
{
4 & 0 & 0 & 0 & 0\\
0 & 3 & 0 & 0 & 0\\
4 & 0 & 1 & 0 & 0 \\
0 & 2 & 0 & 3 & 0 \\
3 & 0 & 0 & 0 & 1 \\
};
\end{tikzpicture}
&$\times$&
\begin{tikzpicture}[baseline]
\matrix (a) [matrix of nodes,left delimiter=(,right delimiter=)] 
{
4 & 0 & 3 & 0 & 0 \\
0 & 2 & 4 & 0 & 0 \\
4 & 0 & 3 & 1 & 0\\
0 & 2 & 4 & 0 & 2 \\
3 & 0 & 1 & 0 & 0 \\
};
\end{tikzpicture}
&$=$&
\begin{tikzpicture}[baseline]

\matrix (a) [matrix of nodes,left delimiter=(,right delimiter=)] 
{
1 & 0 & 2 & 0 & 0 \\
0 & 1 & 2 & 0 & 0 \\
0 & 0 & 0 & 1 & 0 \\
0 & 0 & 0 & 0 & 1 \\
0 & 0 & 0 & 0 & 0 \\
};
\end{tikzpicture}\\
\midrule

& $B$ & & $A'$ & & $B\times A'$\\
\begin{tikzpicture}[baseline]
\draw[->] (-2,-0.8) to [bend left = 20] node[left=-.5mm] {\tiny$+2\cdot$}  (-2, -0.0);
\draw[->] (-3,-0.8) to [bend left = 20] node[left=-.5mm] {\tiny$+3\cdot$}  (-3, 0.4);
\draw[->] (-4,-0.8) to [bend left = 20] node[left=-.5mm] {\tiny$+2\cdot$}  (-4, 0.8);
\node (tmp) at (-1.7,-.8){\tiny$\cdot 2$};
\end{tikzpicture}\hspace{-5mm}
& 
\begin{tikzpicture}[baseline]
\matrix (u) [matrix of nodes,left delimiter=(,right delimiter=)] 
{
4 & 0 & 0 & 0 & 0\\
0 & 3 & 0 & 0 & 0\\
4 & 0 & 1 & 0 & 0 \\
0 & 2 & 0 & 3 & 0 \\
3 & 0 & 0 & 0 & 1 \\
};
\end{tikzpicture}

&$\times$&
\begin{tikzpicture}[baseline]
\matrix (a) [matrix of nodes,left delimiter=(,right delimiter=)] 
{
4 & \fbox{1} & 3 & 0 & 0 \\
0 & 2 & 4 & 0 & 0 \\
4 & 0 & 3 & 1 & 0\\
0 & 2 & 4 & 0 & 2 \\
3 & 0 & 1 & 0 & 0 \\
};
\end{tikzpicture}
&$=$&
\begin{tikzpicture}[baseline]

\matrix (a) [matrix of nodes,left delimiter=(,right delimiter=)] 
{
1 & 4 & 2 & 0 & 0 \\
0 & 1 & 2 & 0 & 0 \\
0 & 4 & 0 & 1 & 0 \\
0 & 0 & 0 & 0 & 1 \\
0 & 3 & 0 & 0 & 0 \\
};
\end{tikzpicture}\\
\midrule

& $B$ after Step (1) & & $A'$ & & $B \times A'$ after Step (1)\\

\begin{tikzpicture}[baseline]
\draw[->] (-2,0.4) to [bend left = 20] node[left=-.5mm] {\tiny$+4\cdot$}  (-2, .8);
\node (tmp) at (-1.7,0.4){\tiny$\cdot 3$};
\node (unseen) at (-4,-.8) {};
\node (unseen) at (-4,.8) {};
\end{tikzpicture}\hspace{-10mm}

& 
\begin{tikzpicture}[baseline]
\matrix (u) [matrix of nodes,left delimiter=(,right delimiter=)] 
{
0 & 0 & 0 & 0 & 2\\
4 & 3 & 0 & 0 & 3\\
0 & 0 & 1 & 0 & 2 \\
0 & 2 & 0 & 3 & 0 \\
1 & 0 & 0 & 0 & 2 \\
};

\end{tikzpicture}
&$\times$&
\begin{tikzpicture}[baseline]
\matrix (a) [matrix of nodes,left delimiter=(,right delimiter=)] 
{
4 & 1 & 3 & 0 & 0 \\
0 & 2 & 4 & 0 & 0 \\
4 & 0 & 3 & 1 & 0\\
0 & 2 & 4 & 0 & 2 \\
3 & 0 & 1 & 0 & 0 \\
};
\end{tikzpicture}
&$=$&
\begin{tikzpicture}[baseline]

\matrix (a) [matrix of nodes,left delimiter=(,right delimiter=)] 
{
1 & 0 & 2 & 0 & 0 \\
0 & 0 & 2 & 0 & 0 \\
0 & 0 & 0 & 1 & 0 \\
0 & 0 & 0 & 0 & 1 \\
0 & 1 & 0 & 0 & 0 \\
};
\end{tikzpicture}\\

\midrule

& $B$ after Step (2) & & $A'$ & & $B \times A'$ after Step (2)\\

& 
\begin{tikzpicture}[baseline]
\matrix (u) [matrix of nodes,left delimiter=(,right delimiter=)] 
{
1 & 2 & 0 & 0 & 4\\
2 & 4 & 0 & 0 & 4\\
0 & 0 & 1 & 0 & 2 \\
0 & 2 & 0 & 3 & 0 \\
1 & 0 & 0 & 0 & 2 \\
};
\end{tikzpicture}
&$\times$&
\begin{tikzpicture}[baseline]
\matrix (a) [matrix of nodes,left delimiter=(,right delimiter=)] 
{
4 & 1 & 3 & 0 & 0 \\
0 & 2 & 4 & 0 & 0 \\
4 & 0 & 3 & 1 & 0\\
0 & 2 & 4 & 0 & 2 \\
3 & 0 & 1 & 0 & 0 \\
};
\end{tikzpicture}
&$=$&
\begin{tikzpicture}[baseline]

\matrix (a) [matrix of nodes,left delimiter=(,right delimiter=)] 
{
1 & 0 & 0 & 0 & 0 \\
0 & 0 & 1 & 0 & 0 \\
0 & 0 & 0 & 1 & 0 \\
0 & 0 & 0 & 0 & 1 \\
0 & 1 & 0 & 0 & 0 \\
};
\end{tikzpicture}\\

\midrule

& $B$ after Step (3) & & $A'$ & & $B \times A'$ after Step (3)\\

& 
\begin{tikzpicture}[baseline]
\matrix (u) [matrix of nodes,left delimiter=(,right delimiter=)] 
{
1 & 2 & 0 & 0 & 4\\
1 & 0 & 0 & 0 & 2 \\
 2 & 4 & 0 & 0 & 4\\
0 & 0 & 1 & 0 & 2 \\
0 & 2 & 0 & 3 & 0 \\
};
\end{tikzpicture}
&$\times$&
\begin{tikzpicture}[baseline]
\matrix (a) [matrix of nodes,left delimiter=(,right delimiter=)] 
{
4 & 1 & 3 & 0 & 0 \\
0 & 2 & 4 & 0 & 0 \\
4 & 0 & 3 & 1 & 0\\
0 & 2 & 4 & 0 & 2 \\
3 & 0 & 1 & 0 & 0 \\
};
\end{tikzpicture}
&$=$&
\begin{tikzpicture}[baseline]

\matrix (a) [matrix of nodes,left delimiter=(,right delimiter=)] 
{
1 & 0 & 0 & 0 & 0 \\
0 & 1 & 0 & 0 & 0 \\
 0 & 0 & 1 & 0 & 0 \\
0 & 0 & 0 & 1 & 0 \\
0 & 0 & 0 & 0 & 1 \\
};
\end{tikzpicture}\\

\end{tabular}
\caption{Illustration of the modifications necessary for one change in
matrix $A$ for $p = 5$. }\label{fig:rank}
\end{figure}

For each of the row operations in the algorithm, the same operation is applied to the matrix $B$. This ensures that $B' A' = E'$. As all these row operations correspond to multiplying a suitable elementary matrix from the left, $B$ remains invertible (see, e.g., \cite[p. 133]{Meyer00}). Each of the three steps can be performed in constant parallel
time.

A dynamic program can be easily obtained from the informal description given above. It maintains auxiliary relations that encode the matrices $B$ and $E$. As each of the steps (1)-(3) can be performed in constant parallel time and  since $<$, $+$ and $\times$ are available in the logical representation of the matrix $A$, the steps can be translated into a \DynFO update program~$\calP$ in a straight-forward way.

\begin{theorem}\label{prop:fmr:in:dynfoarithmetic}
  $\Prank$ is in $\DynFO$.
\end{theorem}

\subsection{DynFO(+, $\times$) vs. DynFO}
\progress{90}
\label{sec:domainindependent}

Since, \reach is clearly weakly domain independent, the
proof for Theorem~\ref{theo:reachindynfo} can be completed by a proof for
the following result.

\begin{proposition}\label{proposition:acisfo}
  If a query \mbox{$q\in \DynFOarithmetic$} is weakly
  domain independent, then \mbox{$q\in \DynFO$}. 
\end{proposition}

Etessami already observed that \DynFO programs have the same
expressive power as \DynFOarithmetic programs, if, before the actual change sequence starts, for
each element $u$ of the universe, the edge $(u,u)$ is inserted and
afterwards deleted \cite{Etessami98}. He described how these
preliminary changes can be used to construct a linear order and
compatible $+$ and $\times$ predicates on the whole universe.  He also
observed that, alternatively, arithmetic can be defined incrementally, so that at any point there are
relations $<_{ad}$, $+_{ad}$ and $\times_{ad}$ that represent a linear
order on the activated elements, and corresponding ternary addition and multiplication
relations, respectively. Here, an element $u$ of the domain is called
\emph{activated}
by a change sequence $\alpha=\delta_1,\ldots,\delta_\ell$, if $u$ occurs in
some $\delta_i$, no matter, whether an edge with $u$ is still present
after the whole sequence $\alpha$. In the following, we refer
to the set of activated elements of change sequence $\alpha$ by
$A(\alpha)$ and by $A$, if $\alpha$ is clear from the
context.\footnote{Usually, $\alpha$ is just the sequence of all
  changes of the computation at hand.}

We point out the subtle differences
between elements of the domain, elements of the \emph{active} domain, and activated elements. The domain contains all elements that can be
used in relations and does not change during a dynamic
computation. The active domain $\adom(\db)$ of a database $\db$ that results from a
change sequence $\alpha$ (applied to an initially empty database)
consists of all elements that occur in some tuple of $\db$. An element
is activated, if it occurs somewhere in $\alpha$. In particular, every
element of $\adom(\db)$ is activated and every activated element is in
the domain, but not necessarily vice versa. For example, adding the edges $(1,2)$ and $(2,3)$ to an initially empty graph over domain $\{1, 2, 3, 4\}$ and subsequently deleting the edge $(1, 2)$, yields an input database with active domain $\{2,3\}$ and activated elements $1,2,3$.

We show next that  \DynFO programs can simulate \DynFOarithmetic programs for weakly 
domain independent queries without any form of preprocessing.

\begin{proof}[Proof (of Proposition \ref{proposition:acisfo})]
Let $q$ be a weakly domain independent query and $\prog$ a \DynFOarithmetic program that maintains
$q$.  For simplicity, we assume that $q$ uses only one binary relation
$E$, the adaptation for arbitrary structures is straightforward.
We recall that change sequences
are applied to an initially empty structure, but that $\prog$ uses
non-empty initial relations that provide a linear order
and the corresponding addition and multiplication relations on the
full universe.

We will construct a \DynFO program $\prog'$ that simulates
$\prog$. By definition of~$\DynFO$, $\prog'$ has to maintain $q$ under
change sequences from an initially empty structure (just as $\prog$) but
with initially empty auxiliary relations (unlike $\prog$). 
The challenge is therefore that $\prog'$ cannot simply
simulate $\prog$ right from the beginning of the change sequence, as it does
not have $<$, $+$ and $\times$ available.

We first give a rough description of the construction of
$\prog'$. More details will be given below.  
The update program $\prog'$ maintains a linear
order $<$ on$~A$. Thanks to the linear
order, we can always associate $A$ with a set of size $m+1$ of the form $\numz{m}$, for some
number $m$, with the natural linear order. In fact, we assume for the
moment that $A$ is always of this form. Furthermore, an addition
relation and a multiplication relation on $A$  is maintained, just as in \cite{Etessami98}.

For the construction of $\prog'$ we view
computations of $\prog$  as a sequence of stages, based on the size of
$A$.  More precisely, we say that a
computation of $\prog$ on a universe $U$ of size $n$ is in stage
$i< \log \log n + 1$, if more than $N_i$ but at most $N_{i+1}$ elements of
$U$  are activated, where $N_i \df 2^{2^i}$, for every $i \geq 0$. We will ignore the case where $\leq N_0 = 2$ elements are activated in the following; it can be easily dealt with separately.

The basic idea of the construction of $\prog'$ is to use different \emph{threads} that
simulate the different stages of $\prog$ and we refer to the thread that is responsible
for stage $i$ as thread $i$.

For each $i$, thread $i$ begins as soon as $\prog$ enters stage $i-1$
and ends at the end of stage $i$ of $\prog$. During stage $i$ of $\prog$, thread $i$ is \emph{in
  charge}. The query result for $q$ is
always provided by the thread that is in charge. See Figure
\ref{figure:proposition:acisfo} for an illustration. 

  \begin{figure}[t] 
    \begin{center}
     \scalebox{1.0}{
        \begin{tikzpicture}[
             xscale=0.85,
          ]
          
          \draw [thick] (1,0) to (15,0);
          
          \draw [thick] (2,0.3) to (2,-0.3);
          \draw [thick] (6,0.3) to (6,-0.3);
          \draw [thick] (10,0.3) to (10,-0.3);
         \draw [thick] (14,0.3) to (14,-0.3);

          \node (name) at (2,0.8)  {$N_{i-1}$};
          \node (name) at (6,0.8)  {$N_{i}$};
          \node (name) at (10,0.8)  {$N_{i+1}$};
         \node (name) at (14,0.8)  {$N_{i+2}$};

          \draw [decorate,thick, align=left, decoration={brace,mirror,amplitude=10pt}](2.1,-0.7) -- (5.9,-0.7);
          \draw [decorate,thick, align=left, decoration={brace,mirror,amplitude=10pt}](6.1,-0.7) -- (9.9,-0.7);
          \draw [decorate,thick, align=left, decoration={brace,mirror,amplitude=10pt}](10.1,-0.7) -- (13.9,-0.7);

\small
          
          \draw [line width=5mm,color=black!10] (1,-1.5) to (15,-1.5);

          \draw [line width=5mm,color=black!10] (1,-2.5) to (15,-2.5);
 
         \node (name) at (4,-1.5)  {stage $i{-}1$ };
         \node (name) at (8,-1.5)  {stage $i$ };
         \node (name) at (12,-1.5)  {stage $i{+}1$ };

         \node (name) at (4,-2)  {initialise thread $i$};
        \node (name) at (8,-2)  {initialise thread $i{+}1$};
        \node (name) at (12,-2)  {$\cdots$};
  
         \node (name) at (4,-2.5)  {$\cdots$};
         \node (name) at (8,-2.5)  {thread  $i$ in charge};
          \node (name) at (12,-2.5)  {thread  $i{+}1$ in charge};
      \end{tikzpicture}
     
      }
      \caption{Illustration of the stages used in the proof of Proposition \ref{proposition:acisfo}.}
      \label{figure:proposition:acisfo}
    \end{center}
  \end{figure}

When thread $i$ starts, a linear order, an
addition relation and a multiplication relation over
$\numz{N_{i-1}}$ are available. From these relations a linear order, an
addition relation and a multiplication relation on 4-tuples over
$\numz{N_{i-1}}$ can be easily defined in first-order logic.

Thread $i$ uses the set of 
4-tuples over $\numz{N_{i-1}}$
as universe of size $N_{i+1}$. It uses one $4k$-ary auxiliary
relation~$R'$, for every (auxiliary or input)
$k$-ary relation $R$ of $\prog$. 
It starts on the structure over
$(\numz{N_{i-1}})^4$ with empty input relations and with the linear order and the
corresponding addition and multiplication relations over
$(\numz{N_{i-1}})^4$ available. It is thus in the position to simulate the behaviour  of
$\prog$ on an initially empty structure.

By $E'$, we refer to the 8-ary relation of thread $i$ corresponding to
the input relation $E$ of $\prog$.
When stage $i-1$ starts, relation $E$ might already contain 
up to (around)  $N_{i-1}^2$ edges, whereas $E'$ is
empty, since thread $i$ has not started yet. Therefore, thread $i$ can
not immediately simulate $\prog$ in a lock-step fashion, but it first
has to catch up with $\prog$. Indeed, thread $i$ will make sure
that at the end of stage $i-1$ all tuples in $E$ have corresponding
tuples in $E'$, so that it is prepared to be in charge. 

In order to catch up, thread
$i$ needs to add more than one edge per step. It is not hard to figure
out that
it suffices to add at most four edges per step to $E'$. The details will
be given below. During stage $i$, thread $i$ can simulate $\prog$
in a lock-step fashion, mimicking every step. The universe
$(\numz{N_{i-1}})^4$ is large enough to represent each new element that is
activated by some 4-tuple over $N_{i-1}$. After stage $i$, thread $i$
is abandoned and thread $i+1$ takes over. The moment, when thread $i+1$ has to take over can be recognised by maintaining a counter for each thread: if the counter of thread $i$ reaches the value $(N_{i-1})^4$ then thread $i+1$ has to take over in the next step. 
\\

\noindent
Next, we describe $\prog'$ in more detail.

We describe first, how $\prog'$ constructs a linear order\footnote{We
  use infix notation for $<$, $+$ and $\times$. } $<$, an addition
relation $+$ and a multiplication relation $\times$ on the set $A$ of activated
elements. This part of the simulation is just as
in~\cite{Etessami98}. We recall that $\prog$ 
 never changes its linear order, addition relation and
 multiplication.

The relation $<$ orders the activated elements, in the order of
activation. For concreteness: if an edge 
$(a,b)$ is inserted which activates $a$ and $b$ then $a<b$ become the
two largest elements of $<$.

The update formula for determining whether a tuple $(y_1, y_2)$ is in the relation $<$ after inserting an edge $(x_1, x_2)$ into $E$ states that 
\begin{itemize}
\item $y_1<y_2$; or
\item $y_1$ is already activated, $y_2=x_1$ or $y_2=x_2$, and is not yet activated; or
\item $y_1=x_1$, $y_2=x_2$, $y_1\not=y_2$,  and both $y_1$ and $y_2$ are not yet activated. 
\end{itemize}
That an element $x$ is activated can be expressed by $\exists y\;
x < y \lor y < x$.

For deletion operations, nothing has to be changed that is, 
$\uf{<}{\del E}{x_1,x_2}{y_1,y_2}=y_1<y_2$.

We always identify activated elements with their position in $<$,
that is, the minimal element in $<$ is considered as 0, the second
as 1 and so on. We use numbers as constants in formulas. It is
straightforward to replace these numbers by ``pure'' formulas. For
example, the subformula $x>1$ can be replaced by $\exists x_1 \exists
x_2\; x_1< x_2 \land x_2 < x$.

The formulas for $+$ and $\times$ are in the same spirit and use
the well-known inductive definitions of addition and multiplication,
respectively. 

Thread $i$ considers 4-tuples as 4-digit base-$N_{i-1}$ numbers and
thus identifies a 4-tuple  $(u_1, u_2, u_3, u_4)$ over $\numz{N_{i-1}}$  with the number $u_1\times
N^3_{i-1} + u_2\times
N^2_{i-1} + u_3\times
N_{i-1} +  u_4$.

We note that $<$, $+$, $\times$ can be lifted to
relations over 4-tuples in first-order logic and therefore, they need
not be maintained as auxiliary relations.
    
During stages $i-1$ and $i$, thread $i$ maintains a bijection $g_i$ between
the activated elements and 4-tuples over $\numz{N_{i-1}}$.  At
the start of thread $i$, $g_i(k)=(0,0,0,k)$, for
every~\mbox{$k\in\numz{N_{i-1}}$}, and $g_i$ is extended in a
straightforward fashion. 

As explained above, thread $i$ has to catch up during stage $i-1$ to
make sure that at the beginning of stage $i$, its relation $E'$ is
isomorphic to $E$ under $g_i$. To this end, thread $i$ decreases the size of
the symmetric difference $\Delta\df g_i(E) \vartriangle E'$ of
$g_i(E)$ and $E'$ by three tuples, for each change step, in the following fashion. When a change $\delta$
occurs, it is first simply applied to~$E$, without triggering the
associated update operations. Afterwards, thread $i$ identifies the
lexicographically smallest (up to) four pairs $e_1,e_2,e_3,e_4$ over
$\numz{N_{i}}$, whose image under $g_i$ is in $\Delta$ and sequentially applies the
appropriate update operations. That is, if $g_i(e_j)\in g_i(E)\setminus E'$
it simulates the operations of $\prog$ for an insertion of $g_i(e_j)$ and otherwise
for deletion. It is easy to see that, in this way, $|\Delta|$ indeed decreases by at
least three, unless $|\Delta|< 4$ already. Since $|\Delta|\le
N_{i-1}^2$ initially, $\frac{1}{3}N_{i-1}^2$ change steps would suffice for thread $i$ to
catch up. Since stage $i-1$ has at least $\frac{1}{2}(N_i-N_{i-1})$
change steps and $N_i-N_{i-1}=N_{i-1}^2-N_{i-1}\ge \frac{3}{4}N_{i-1}^2$ for
$i\ge 2$, this really works out.\footnote{The border case $i=1$ and
  $|\delta|\le 3$ can be handled in a straightforward, mostly
  analogous way.}

The query result during stage $i$ is always $g_i^{-1}(Q')$, where $Q'$
is the auxiliary relation corresponding to relation $\prog$'s query relation
$Q$ in $\prog'$.

This completes the description of the behaviour of thread $i$, for each
$i$. 

Of course, it is not possible to let each thread use its own
set of auxiliary relations. However, we can simply increase the arity
of each relation by one and use the additional entry to indicate, for
each tuple, the number of the thread, for which it is used. As an
example, all relations $E'$ are encoded
into one 9-ary relation $\Eh$ and the relation $E'$ of thread $i$ is just the
set of tuples $\{t\mid (i,t)\in \Eh\}$.

The
correctness of $\prog'$ can be shown in two steps. Let $\alpha$ denote
some change sequence and, for each $i$, let $\alpha_i$ denote the
prefix of $\alpha$ that lasts until the end of stage $i$. First, for each
$i$, it can be shown that at the start of stage $i$, the auxiliary
relations of thread $i$ are the image under $g_i$ of the auxiliary
relations of state $\prog_{\alpha'}(\state_\emptyset)$, for some
change sequence $\alpha'$ that yields the same structure as $\alpha_{i-1}$, relative to
domain $\numz{N_{i+1}}$.  Second, it
is easy to see that during stage $i$, program $\prog'$ can 
correctly keep track of the changes and updates. Thanks to the weak
domain independence of query $q$, the
output $g_i^{-1}(Q')$ is correct during stage $i$, and therefore
$\prog'$ has a correct output, at any time.
\end{proof}

We remark that Proposition \ref{proposition:acisfo} does not hold for arbitrary queries. For example, the domain dependent boolean query $q_\text{even}$, which is true for domains of even size and false otherwise, cannot be maintained in $\DynFO$ from scratch. This is because the first-order initialisation formulas cannot tell domains of even and odd size apart for large, empty structures (see, e.g., \cite{Libkin04}).

\section{Some Applications}
\progress{99}
\label{sec:applications}
From Theorem~\ref{theo:reachindynfo} the maintainability of other important queries can be inferred. The bounded first-order reduction shown in Example~\ref{ex:bfott} immediately yields 
\begin{corollary}\label{theo:rpq}
  \TwoSat is in \DynFO.
\end{corollary}

We next exhibit a straightforward bounded first-order reduction from regular path queries to Reachability.

Graph databases have received considerable attention in the database theory community recently (see e.g. \cite{MendelzonW95,AbiteboulV99, ArenasCP12, LosemannM13} and \cite{Wood12survey, Baeza13} for surveys). Usually they contain huge amounts of data, and therefore queries on graph databases should be evaluated in parallel and, if possible, dynamically. In the following we show that the answer of a (fixed) regular path query can be re-evaluated dynamically after a modification of a graph database.

We make those notions more precise first.  Graph databases are usually modeled by directed graphs with edge labels from a finite alphabet. A regular path query $q$ is a regular expression over label names. Evaluating $q$ on a graph database $G$ yields all pairs $(u,v)$ of nodes for which there is a (not necessarily simple) path from $u$ to $v$ in $G$ whose sequence of labels is in the language specified by~$q$.\footnote{The set of labels actually needs not be
  fixed a priori. However, given a regular expression $r$, only labels
  that occur in $r$ are relevant for maintaining $r$ and all other
  labels can be replaced by some fixed label $X$ not occurring in
  $r$.} We thus model graph databases as finite structures with one binary relation $E_a$ per edge label $a$.

In the following we show that regular path queries can be maintained in $\DynFO$. To this end we present a simple and well-known bounded first-order reduction to the Reachability problem \cite{KahlerW03}. Let $A$ be an NFA for the language of a regular path query $q$, let $Q$ be its set of states, and let $s_0$ be the initial state and
$s_f$ the unique accepting state of $A$, respectively. Let the synchronised product $G\times A$ of $G$ and $A$ be the (directed, unlabeled) graph with node set
$G\times Q$ and an edge from $(u_1,p_1)$ to $(u_2,p_2)$ if there is an $a$-labeled edge from $u_1$ to $u_2$ for some symbol $a$, for which there is also a transition from $p_1$ to $p_2$. Then,
$(u,v)\in q(G)$ holds if and only if $(v,s_f)$ is reachable from $(u,s_0)$ in  $G\times A$. Since
each single change in $G$ only induces at most $|Q|$ first-order definable changes in
$G\times A$, the reduction is bounded and therefore, the
maintainability of (fixed) regular path queries follows from Theorem~\ref{theo:reachindynfo}. This easily extends to conjunctions of regular path queries since $\DynFO$ is closed under boolean operations. 

\begin{corollary}\label{theo:rpq}
    Regular path queries and conjunctions thereof can be maintained in \DynFO.
\end{corollary}
Further classes of query languages for labeled graphs have been studied in the dynamic context in the literature, see \cite{WeberS07,Zeume15thesis,MunozVZ16}.

\section{Maintaining the Size of Maximum Matchings}
\progress{60}
\label{sec:matchings}

Matchings in graphs are one of the most studied graph-theoretical concepts in Computer Science with many applications (see, e.g., \cite{LovaszPP86, KarpinskiR98}). In this section, we show that the size of maximum matchings, and therefore also the existence of a perfect matching, can be maintained in non-uniform~$\DynFO$. We recall that non-uniform~$\DynFO$ programs can use initial databases that can depend on the size of the universe in an arbitrary, even non-computable way. 

To this end we reuse the techniques developed in Section \ref{sec:reachindynfo} for maintaining the rank of a matrix.  In previous work a non-uniform dynamic program with $\TC^0$-updates has been obtained for both problems \cite{DattaHK14}. It remains open whether (maximum or perfect) matching can be maintained in uniform  $\DynFO$.

The basic idea of our approach relies on a correspondence between the rank of the Tutte matrix of a graph and the size of maximum matchings. The \emph{Tutte matrix} $T_G$ of an undirected graph $G$ is the $n \times n$ matrix with entries 
$$t_{ij} =     
  \begin{cases} 
    x_{ij} & \text{if $(i,j) \in E$ and $i < j$}  \\
    -x_{ji} & \text{if $(i,j) \in E$ and $i > j$}  \\  
    0 & \text{if $(i,j) \not\in E$} 
  \end{cases}$$
where the $x_{ij}$ are indeterminates.\footnote{The rank of a matrix with indeterminates can be defined as the size of the largest quadratic submatrix with non-zero determinant.}

\begin{theorem}[Lov\'{a}sz \cite{Lovasz79}]\label{theorem:maximum-matching}
  Let $G$ be a graph with a maximum matching of size $m$. Then $\rank(T_G) = 2m$.
\end{theorem}

Unfortunately, the rank maintenance algorithm presented in Section \ref{sec:rank} cannot be applied immediately as the entries of $T_G$ are indeterminates, and applying the maintenance algorithm to a matrix with indeterminates might yield polynomials with  exponentially many terms. However, the rank of $T_G$ can be determined by computing the rank for a matrix obtained by replacing the indeterminates in $T_G$ by well-chosen positive integer values. For a graph~$G$, let $w$ be a function that assigns a positive integer weight to every edge $(i, j)$ and let $B_{G,w}$ be the integer matrix obtained from $T_G$ by substituting $x_{ij}$ by $2^{w(i,j)}$. 
\begin{theorem}\label{theorem:maximum-matching-weighted}
  If $G$ is a graph with a maximum matching of size $m$ and $w$ is a weight assignment for the edges of $G$ then $\rank(B_{G, w}) \leq 2m$. Furthermore, if $G$ has a maximum matching with unique minimal weight with respect to $w$ then~\mbox{$\rank(B_{G, w}) = 2m$}.
\end{theorem}

This theorem is implicit in Lemma 4.1 in \cite{Hoang}.
 For the sake of completeness we give a full proof here. The proof uses the following theorem.
\begin{theorem}[Mulmuley, Vazirani and Vazirani \cite{MulmuleyVV87}] \label{theorem:mulmuley}
  Let $G$ be a graph with a perfect matching and $w$ a weight
  assignment such that $G$ has a unique perfect matching with minimal weight with respect to $w$. Then~\mbox{$\det(B_{G, w}) \neq 0$}.
\end{theorem}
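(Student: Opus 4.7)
The plan is to exploit the fact that $T_G$ is skew-symmetric, so $\det(T_G)$ equals the square of the Pfaffian $\text{Pf}(T_G)$, which has a clean combinatorial expansion over perfect matchings. Concretely,
\[
\text{Pf}(T_G) \;=\; \sum_{M} \text{sgn}(M) \prod_{(i,j)\in M,\; i<j} x_{ij},
\]
where $M$ ranges over perfect matchings of $G$ and $\text{sgn}(M)\in\{-1,+1\}$ is the sign of the permutation induced by $M$. Substituting $x_{ij}=2^{w(i,j)}$ gives
\[
\text{Pf}(B_{G,w}) \;=\; \sum_{M} \text{sgn}(M)\, 2^{w(M)}, \qquad w(M)\;\df\;\sum_{(i,j)\in M} w(i,j).
\]

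Once this formula is in hand, the uniqueness hypothesis does the rest. Let $M^\star$ be the unique minimum-weight perfect matching and write $w^\star = w(M^\star)$. By assumption, every other perfect matching $M$ satisfies $w(M) \ge w^\star + 1$, hence $2^{w(M)}$ is divisible by $2^{w^\star+1}$. Dividing the sum by $2^{w^\star}$ therefore yields
\[
2^{-w^\star}\, \text{Pf}(B_{G,w}) \;=\; \text{sgn}(M^\star) \;+\; 2\cdot \sum_{M\neq M^\star} \text{sgn}(M)\, 2^{w(M)-w^\star - 1},
\]
which is an odd integer, and in particular non-zero. Consequently $\text{Pf}(B_{G,w})\neq 0$, and therefore $\det(B_{G,w}) = \text{Pf}(B_{G,w})^2 \neq 0$.

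The only step that really needs care is the Pfaffian identity itself; I would either invoke it as a standard fact or, if a self-contained argument is preferred, derive it by pairing up the permutations that give rise to the same perfect matching in the Leibniz expansion of $\det(T_G)$ and checking that the signs combine correctly. The main conceptual content is the second paragraph: the uniqueness of the minimum-weight matching ensures that its distinguished power of $2$ cannot be cancelled by the contributions of the other matchings, regardless of their signs. Note that if $n$ is odd the claim is vacuous since $G$ cannot have a perfect matching, so we may assume $n$ even and both $\text{Pf}$ and $\det$ are well-defined.
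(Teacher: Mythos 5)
Your argument is correct. The paper does not prove this statement at all --- it is quoted verbatim as a cited result of Mulmuley, Vazirani and Vazirani --- so there is no proof to compare against; what you give is essentially the standard (indeed, the original) argument: since $T_G$ is skew-symmetric, $\det(B_{G,w})=\mathrm{Pf}(B_{G,w})^2$, the Pfaffian expands as a signed sum of $2^{w(M)}$ over perfect matchings $M$, and because the weights are positive integers the unique minimizer $M^\star$ contributes the only term not divisible by $2^{w^\star+1}$, making $2^{-w^\star}\mathrm{Pf}(B_{G,w})$ odd and hence non-zero. The one step you flag yourself --- the Pfaffian expansion and the identity $\det = \mathrm{Pf}^2$ --- is indeed the only ingredient that needs to be either cited or derived (e.g.\ by grouping the permutations in the Leibniz expansion of $\det(T_G)$ into even-cycle covers, which correspond to pairs of perfect matchings), and with that in place the proof is complete.
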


\begin{proofof}{Theorem \ref{theorem:maximum-matching-weighted}}
  Recall that the rank of a matrix can be defined as the size of the largest submatrix with non-zero determinant. Thus $\rank(B_{G, w}) \leq \rank(T_G)$, and therefore $\rank(B_{G, w}) \leq 2m$ by Theorem \ref{theorem:maximum-matching}. 
  
  For showing $\rank(B_{G, w}) \geq 2m$ when $G$ has a maximum matching of unique minimal weight with respect to $w$, we adapt the proof of Theorem~\ref{theorem:maximum-matching} given in \cite{RabinV89}. Let $U$ be
  the set of vertices contained in the unique maximum matching of $G$ with
  minimal weight, and $G'$ the subgraph of $G$ induced by~$U$. Observe that $G'$ has a unique minimal weight \emph{perfect} matching with respect to $w$. Restricting $B_{G, w}$ to rows and columns labeled by elements from $U$ yields the matrix $B_{G', w'}$ where $w'$ is the weighting $w$ restricted to edges from $G'$.  However, then $\det (B_{G', w'}) \neq 0$ by Theorem \ref{theorem:mulmuley} and therefore~\mbox{$\rank(B_{G, w}) \geq 2m$}.
\end{proofof}

Using the technique implicit in \cite{ReinhardtA00} one can find, for every $n \in \N$, weighting functions $w_1, \ldots, w_{n^2}$ with weights in $\num{4n}$,   such that for every graph $G$ over $\num{n}$ there is an $i \in \num{n^2}$ such that  $G$ has a maximum matching with unique minimal weight with respect to $w_i$. 

We show how to obtain those functions. The following lemma is due to Mulmuley, Vazirani and Vazirani \cite{MulmuleyVV87}, but we use the version stated in \cite[Lemma 11.5]{Jukna01}.

\begin{lemma}[Isolation Lemma] \label{lemma:isolation}
  Let $m, M \in \N$ and  let $\mathcal{F} \subseteq 2^{\num{m}}$ be a non-empty set of subsets of $\num{m}$. If a weight
  function $w\in \num{M}^{\num{m}}$ is uniformly chosen at random, then with probability at least $1- \frac{m}{M}$, the minimum weight subset in $\mathcal{F}$ is unique; where the weight of a subset $F \in \calF$ is $\sum_{i \in F} w(i)$.
\end{lemma}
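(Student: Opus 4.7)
The plan is to prove the Isolation Lemma by the classical \emph{principle of deferred decisions}, bounding, element by element, the probability that the element is ``ambiguous'' with respect to minimum-weight sets, and then taking a union bound.

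First I would introduce a notion capturing the obstruction to uniqueness. Call an element $i \in [n]$ \emph{ambiguous} (with respect to $w$) if there exist two minimum weight sets $F_1, F_2 \in \mathcal{F}$ such that $i \in F_1$ and $i \notin F_2$. The key combinatorial observation is: if no element is ambiguous, then the minimum weight set is unique. Indeed, any two distinct minimum weight sets $F_1 \neq F_2$ must differ by at least one element $i$, and that $i$ would witness ambiguity; so uniqueness fails only when some element is ambiguous.

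The main probabilistic step is to bound, for each fixed $i \in [n]$, the probability that $i$ is ambiguous. Here I would use deferred decisions: expose first all the weights $w(j)$ for $j \neq i$, and leave $w(i)$ to be chosen afterwards. Given these other weights, let
\[
  \alpha_i \;=\; \min\bigl\{\, w(F \setminus \{i\}) : F \in \mathcal{F},\ i \in F \,\bigr\}, \qquad
  \beta_i \;=\; \min\bigl\{\, w(F) : F \in \mathcal{F},\ i \notin F \,\bigr\},
\]
where $w(S) = \sum_{j \in S} w(j)$ and we use the convention that a minimum over an empty set is $+\infty$. Then $i$ is ambiguous precisely when the cheapest set containing $i$ and the cheapest set avoiding $i$ have equal total weight, i.e.\ when $w(i) + \alpha_i = \beta_i$, which determines a single value for $w(i)$. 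Since $w(i)$ is uniform over $[N]$ and independent of the remaining weights, the probability that $w(i)$ hits this one specific value is at most $1/N$.

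Finally I would union-bound over the $n$ elements to get that the probability some element is ambiguous is at most $n/N$, hence with probability at least $1 - n/N$ no element is ambiguous and, by the observation above, the minimum weight set is unique. I do not expect any serious obstacle here; the only subtlety to be careful about is the case where $\mathcal{F}$ contains no set that contains $i$ (or no set that avoids $i$), in which case $i$ cannot be ambiguous at all and the bound $1/N$ holds trivially, as well as handling the convention for $\min \emptyset$ cleanly.
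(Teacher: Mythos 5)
Your proof is correct. Note that the paper does not actually prove this lemma --- it is imported from Mulmuley, Vazirani and Vazirani (in the form stated in Jukna's book) and used as a black box --- so there is no in-paper argument to compare against. What you give is the standard textbook proof: defining an element $i$ to be ambiguous, observing that non-uniqueness forces some ambiguous element, deferring the choice of $w(i)$ so that ambiguity of $i$ pins $w(i)$ to the single value $\beta_i-\alpha_i$ (an event of probability at most $1/N$, and vacuous when one of the two minima is over an empty family), and finishing with a union bound over the $n$ elements. All steps are sound and the stated bound $1-\frac{n}{N}$ follows exactly as you describe.
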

\begin{lemma}[Non-uniform Isolation Lemma, implicit in \cite{ReinhardtA00}] \label{lemma:non-uniform-isolation}
  Let $m \in \N$ and \mbox{$\calF_1, \ldots, \calF_{2^m} \subseteq 2^{\num{m}}$}. There are  weight functions $w_1, \ldots, w_m$ from $\num{4m}^{\num{m}}$ such that for any $i\in\num{2^m}$ with $\calF_i\not=\emptyset$, there exists a $j \in \num{m}$  such that the minimum weight subset of $\calF_i$ with respect to $w_{j}$ is unique. 
\end{lemma}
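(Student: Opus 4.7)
The plan is to prove existence by a straightforward probabilistic argument, combining the standard Isolation Lemma (Lemma~\ref{lemma:isolation}) with independent repetition and a union bound over the exponentially many families.

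First I would draw the $m$ weight assignments $w^1,\ldots,w^m$ independently and uniformly at random from $[4m]^{[m]}$. Fix any index $i\in[2^m]$ and any $j\in[m]$. By Lemma~\ref{lemma:isolation} applied to the family $\calF_i\subseteq 2^{[m]}$ with $n=m$ and $N=4m$, the probability (over $w^j$) that $\calF_i$ does \emph{not} have a unique minimum weight subset with respect to $w^j$ is at most $\tfrac{m}{4m}=\tfrac14$. Since the $w^j$ are independent, the probability that \emph{no} $w^j$ isolates a minimum weight subset in $\calF_i$ is at most $\bigl(\tfrac14\bigr)^m = 2^{-2m}$.

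Next, I would apply a union bound over all $2^m$ families. The probability that there exists some $i\in[2^m]$ for which no $w^j$ isolates $\calF_i$ is at most
\[
2^m \cdot 2^{-2m} \;=\; 2^{-m} \;<\; 1.
\]
Therefore the complementary event has positive probability, which means there exists a deterministic choice of $w^1,\ldots,w^m\in[4m]^{[m]}$ such that for every $i\in[2^m]$ some $w^j$ isolates a minimum weight subset of $\calF_i$. This is exactly the statement.

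The argument is genuinely short, so I do not anticipate a serious obstacle; the only point that needs care is matching the parameters of Lemma~\ref{lemma:isolation} ($n=m$, $N=4m$) so that the per-family failure probability is bounded by $1/4$, which is precisely what makes the product $2^m\cdot(1/4)^m$ strictly less than $1$. The price we pay is non-uniformity: the witnessing weight functions are produced by a non-constructive existence argument, which is consistent with the non-uniform \DynFO setting in which this lemma will be used.
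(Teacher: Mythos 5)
Your proof is correct and follows essentially the same route as the paper's: apply the Isolation Lemma with $N=4m$ to bound the per-assignment failure probability by $\tfrac14$, use independence of the $m$ assignments to get $(\tfrac14)^m$, and take a union bound over the $2^m$ families to conclude $2^m\cdot(\tfrac14)^m<1$. No issues.
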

\begin{proof}
  The proof is implicit in the proof of Lemma 2.1 in \cite{ReinhardtA00}. We give a self-contained presentation.
  
  We call a collection $u_1, \ldots, u_m$ of weight functions  \emph{bad} for some $\calF_i$ if no $F \in \calF_i$ is a minimum weight subset with respect to any $u_j$. For each $\calF_i\not=\emptyset$ the probability of a randomly chosen weight sequence $U \df u_1, \ldots, u_m$ to be bad is at most $(\frac{1}{4})^m$ thanks to Lemma \ref{lemma:isolation} (for $M \df 4m$). Thus the probability that such a $U$ is bad for \emph{some} $\calF_i$ is at most $2^m \times (\frac{1}{4})^m < 1$. Hence there exists a sequence $U$ which is good for all $\calF_i$.
\end{proof}

We immediately get the following corollary.

\begin{corollary}\label{corollary:isolation:perfect-matching}
Let $G_1,\ldots,G_{2^{n^2}}$ be some enumeration\footnote{For notational
  simplicity we use $n^2$ instead of $n \choose 2$, here.}
of the graphs on
$\num{n}$ and let $\calF_1, \ldots, \calF_{2^{n^2}}$ be their respective sets
of perfect matchings. There is a sequence $w_1, \ldots, w_{n^2}$ of weight assignments assigning a value from $\num{4n^2}$ to the edges of $\num{n}^2$ such that for every graph $G$ over $\num{n}$ there is some $i \in \num{n^2}$ such that if $G$ has a perfect matching then it also has a perfect matching with unique minimal weight with respect to $w_i$.
\end{corollary}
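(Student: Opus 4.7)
The plan is to obtain the corollary as an essentially immediate consequence of Lemma~\ref{lemma:non-uniform-isolation}, with the right choice of ground set and family enumeration. The isolation lemma already delivers weight assignments that isolate a unique minimum-weight member of each prescribed family; the corollary merely requires this isolation for the family of perfect matchings of each graph on $[n]$.

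Concretely, I would take $m \df n^2$ and identify the ground set $[m]$ with the set of potential edges on vertex set $[n]$ (using $n^2$ rather than $\binom{n}{2}$ for notational simplicity, as in the footnote). For each graph $G_i$ on $[n]$, let $\calF_i \subs 2^{[m]}$ be its set of perfect matchings, where each perfect matching is viewed as the subset of potential edges that it uses. Lemma~\ref{lemma:non-uniform-isolation}, applied to $\calF_1,\ldots,\calF_{2^{n^2}}$, then produces a sequence $w^1, \ldots, w^{n^2}$ of weight assignments on $[n]^2$ with the property that for every $i \in [2^{n^2}]$ there is some $j \in [n^2]$ for which the minimum-weight element of $\calF_i$ with respect to $w^j$ is unique. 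This is exactly the conclusion of the corollary for every graph $G_i$ that has a perfect matching.

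The only mild subtlety is that Lemma~\ref{lemma:non-uniform-isolation} (via the underlying Lemma~\ref{lemma:isolation}) presupposes non-empty families, whereas some $\calF_i$ may be empty when $G_i$ has no perfect matching. This is a non-issue: for such graphs the conclusion of the corollary is vacuously satisfied, so I would simply apply the lemma to the (at most $2^{n^2}$) indices $i$ with $\calF_i \neq \emptyset$. The probabilistic union bound in the proof of Lemma~\ref{lemma:non-uniform-isolation} only becomes easier when fewer families are considered, so the same weight sequence is produced. Hence there is no real obstacle, and the proof amounts to re-expressing the conclusion of the non-uniform isolation lemma in the language of perfect matchings.
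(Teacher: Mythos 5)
Your proposal is correct and matches the paper's treatment: the paper likewise obtains the corollary as an immediate instantiation of Lemma~\ref{lemma:non-uniform-isolation} with $m=n^2$, the ground set identified with the potential edges, and $\calF_i$ the set of perfect matchings of $G_i$. Your remark on empty families (graphs without a perfect matching) is a sensible clarification of a point the paper leaves implicit, but it does not change the argument.
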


Then, in order to maintain the size of maximum matchings of a graph $G$ over $\num{n}$, it is sufficient to maintain $\rank(B_{G, w_i})$ for all $i \in \num{n^2}$. The rank of $T_G$ is the maximal rank among those ranks thanks to Theorem  \ref{theorem:maximum-matching-weighted}.

\begin{theorem}\label{theorem:matching}
  \PMatching and \MMatching are in non-uniform $\DynFO$.
\end{theorem}

\begin{proofsketch}
  It suffices to show that \MMatching is in non-uniform~$\DynFO$. The idea is to advise a dynamic program with the weighting functions $w_1, \ldots, w_{n^2}$ that assign weights such that for all graphs with $n$ nodes there is a maximum matching with unique weight. The advice is given to the dynamic program via the initialisation of the auxiliary relations. The program then maintains the ranks for the matrices $B_{G, w_i}$ and outputs the maximal such rank. We make this more precise in the following. 
  
  Recall that the weighting functions assign values of up to $4n^2$  , and that therefore the determinant of each $B_{G, w_i}$ can be of size up to $n!(2^{4n^2})^n \leq 2^{5n^3}$,  and thus it is sufficient to maintain the rank of those matrices modulo up to $5n^3$ many primes, which are contained in the first $n^4$ numbers by the prime number theorem\footnote{We disregard small values of $n$ as the query can be directly encoded with first-order formulas for such values.}. 
 
 The dynamic program computes, for each of the weighting functions $w_i$ and each prime $ p \leq n^4$, the rank of $B_{G, w_i}$ modulo $p$.

 This informal description can be formalised by exhibiting a non-uniform bfo-tt reduction from $\MMatching$ to $\Prank$. Such reductions are defined as bfo-tt reductions but they can assume arbitrary additional relations on the structure. By an adaptation of the proof of Proposition~\ref{prop:bfott}, it can be shown that if there is a non-uniform bfo-tt reduction from a query $q$ to a query $q'$, then $q$ is in non-uniform $\DynFO$ if $q'\in\DynFO$.
 
 The non-uniform reduction  $(\calJ, \varphi)$ from $\MMatching$ to $\Prank$ is 2-dimensional and $6$-ary. Two dimensions are used to encode graphs as matrices as described in Section \ref{section:linearalgebra}. For a parameter $(p_1, \ldots, p_6)$, the interpretation $\calJ$ maps a given graph to an instance of $\Prank$ that asks for the rank of $B_{G, w_i}$ mod $p$ where $w_i$ is encoded by the first two parameters and $p$ is encoded by the remaining four parameters. For converting edges to entries of $B_{G, w_i}$ mod $p$, the reduction uses non-uniform relations. The wrap-up formula determines the highest rank of all those instances.

\end{proofsketch}

\section{First-order Incremental Evaluation Systems}
\progress{40}
\label{sec:relatedsettings}

The dynamic complexity framework was independently formalised in the
form of \emph{first-order incremental evaluation systems} (short:
\emph{FOIES}) by Dong, Su and Topor (see \cite{DongT92, DongS93}, and
also \cite{DongS97,DongS98}). Apart from notational differences, the
$\DynFO$-setting and FOIES differ in how they treat domains. While
domains in $\DynFO$ are fixed before a dynamic computation starts, the
FOIES-framework allows for the domain to grow and shrink. More
precisely\footnote{We do not give a formal definition of FOIES, but
  only describe how they differ from $\DynFO$-programs.}, the domain of a state of a FOIES computation is the active
domain, that is, the set of elements contained in some tuple of the
input database. Thus, a tuple can be inserted that contains an element
that is not contained in the current  domain, and the domain is
extended by this element. Likewise, when a tuple is removed, and one
of its elements is not contained in any other tuple afterwards, then
that element is removed from the domain.\footnote{In some papers, FOIES may use a bounded number of elements  that are not used by an input tuple \cite{DongS97}.}

Thus in FOIES, the notions of domain, active domain and activated
elements coincide, at any point in time. Yet, FOIES have an infinite
background universe $U$ and the domain $D$ always satisfies $D \subset
U$. We show next that Reachability can also be maintained by FOIES-programs.

\begin{theorem}\label{theorem:DynFOToFOIES}
    If a query \mbox{$Q\in \DynFOarithmetic$} is domain-independent, then $Q$ can be maintained by a FOIES-program.
\end{theorem}

\begin{proofsketch}
  Let $Q$ be a domain-independent query and $\calP$ a \DynFOarithmetic program that maintains $Q$.  We assume that $Q$ uses only one binary relation~$E$, the adaptation for arbitrary structures is straightforward.

  We will construct a FOIES program $\calP'$ that simulates $\calP$. To this end we extend the construction used in the proof of Proposition \ref{proposition:acisfo}. 

  The program $\calP'$ handles changes to the (active) domain by using
  the simulation technique presented in the proof of Proposition
  \ref{proposition:acisfo}. As before, the computations of $\calP$ are
  split into stages and $\calP'$ uses one thread for simulating
  $\calP$, for each different stage. Here, the stages are based on the
  size of the (active) domain $A$. A computation of $\calP$ is in stage $i$, if more than $N_i$ but at most $N_{i+1}$ elements are contained in $A$, where $N_i \df 2^{2^i}$. The update program $\calP'$ uses one thread per stage of $\calP$, the $i$-th thread being responsible for providing the correct query result whenever more than $N_i$ but at most $N_{i+1}$ elements are in the domain.
    \begin{figure}[t] 
    \begin{center}
     \scalebox{1.0}{
        \begin{tikzpicture}[
             xscale=0.85,
          ]
          
          \draw [thick] (1,0) to (15,0);
          
          \draw [thick] (2,0.3) to (2,-0.3);
          \draw [thick] (6,0.3) to (6,-0.3);
          \draw [thick] (10,0.3) to (10,-0.3);
         \draw [thick] (14,0.3) to (14,-0.3);

          \node (name) at (2,0.8)  {$N_{i-1}$};
          \node (name) at (6,0.8)  {$N_{i}$};
          \node (name) at (10,0.8)  {$N_{i+1}$};
         \node (name) at (14,0.8)  {$N_{i+2}$};

          \draw [decorate,thick, align=left, decoration={brace,mirror,amplitude=10pt}](2.1,-0.7) -- (5.9,-0.7);
          \draw [decorate,thick, align=left, decoration={brace,mirror,amplitude=10pt}](6.1,-0.7) -- (9.9,-0.7);
          \draw [decorate,thick, align=left, decoration={brace,mirror,amplitude=10pt}](10.1,-0.7) -- (13.9,-0.7);

\small
          
          \draw [line width=5mm,color=black!10] (1,-2.) to (15,-2.);

          \draw [line width=5mm,color=black!10] (1,-3.) to (15,-3.);
 
         \node (name) at (4,-1.5)  {stage $i{-}1$ };
         \node (name) at (8,-1.5)  {stage $i$ };
         \node (name) at (12,-1.5)  {stage $i{+}1$ };
 
        \node (name) at (4,-2)  {thread  $i{-}1$ in charge};
        \node (name) at (8,-2)  {init thread $i{-}1$};
         \node at (10,-2) {$\longmapsfrom$};
        \node (name) at (12,-2)  {$\cdots$};

         \node at (2,-2.5) {$\longmapsto$};
         \node (name) at (4,-2.5)  {init thread $i$};
        \node (name) at (8,-2.5)  {thread  $i$ in charge};
        \node (name) at (12,-2.5)  {init thread $i$};
         \node at (14,-2.5) {$\longmapsfrom$};
  
         \node (name) at (4,-3)  {$\cdots$};
          \node at (6,-3) {$\longmapsto$};
         \node (name) at (8,-3)  {init thread $i{+}1$};
        \node (name) at (12,-3)  {thread  $i{+}1$ in charge};
     \end{tikzpicture}
     
      }
      \caption{Illustration of the stages used in the proof of Proposition \ref{proposition:acisfo}.}
      \label{figure:proposition:foies}
    \end{center}
  \end{figure}

  Two issues need to be addressed when constructing the
  FOIES-program~$\calP'$: (1) Thread $i$ uses arithmetic on $N_{i-1}$
  elements. When one of those elements is removed from the domain,
  then the arithmetic has to be adapted. (2) In contrast to the
  construction for  Proposition  \ref{proposition:acisfo}, $\calP$ can enter a stage multiple times because elements can be inserted \emph{and} removed from the domain.
  
  The first issue can be easily resolved. Using Etessami's technique, the program $\calP'$ maintains arithmetic relations $<$, $+$, and $\times$ on the domain. When the $j$th element $a$ of the domain is removed, then the arithmetic is adapted by replacing $a$ by the current maximal element $b$ with respect to $<$. To this end, first all tuples containing $b$ are removed from the auxiliary relations, and then every auxiliary tuple $\vec d$ containing $a$ is substituted by the tuple $\vec d'$ obtained by replacing all occurrences of $a$ in $\vec d$ by~$b$. Note that after those substitutions the functions $g_i$ are still bijections between 4-tuples over the first $N_i$ elements of the domain and $N_{i+2}$.
  
  We now address the second issue. Thread $i$ shall provide the correct result whenever the domain contains more than $N_i$ but at most $N_{i+1}$ elements; also when this interval has been reached by removing elements from the domain. In order to guarantee this, thread $i$ is started  either when the $(N_{i-1}+1)$-th element is inserted or when the $(N_{i+2}+1)$-th element is removed from the domain. It stops when the size of the domain is not in $\{(N_{i-1}+1), \ldots, N_{i+2}\}$.

  The way in which thread $i$ acts is just as in the proof of
  Proposition \ref{proposition:acisfo}. For each change operation
  $\delta$, in addition to performing the updates for $\delta$, the
  program $\calP'$ inserts up to five tuples to $\Eh_i$ and simulates
  $\calP$ for these five insertion steps. This guarantees that the
  query result can be decoded from the query result of thread $i$, when the domain contains $n \in \{(N_{i}+1), \ldots,
  N_{i+1}\}$ elements. This has been already shown for the case where
  thread $i$ is started by the insertion of the
  $(N_{i-1}+1)$-th element in the proof of  Proposition
  \ref{proposition:acisfo}. The argument for the case where the
  thread is started when the $(N_{i+2}+1)$st element is removed
  from the domain is similar. At least $\frac{N_{i+2}-N_{i+1}}{2} =
  \frac{N_{i+1}^2 - N_{i+1}}{2}$ edge deletions are needed to get from
  an domain of size $N_{i+2}$ to an domain of size $N_{i+1}$. Thus,
  when arriving at an  domain of size $N_{i+1}$, up to $2(N_{i+1}^2 -
  N_{i+1})$ edges are contained in $\Eh_i$ since each deletion
  contributes four edges to $\Eh_i$ as long as edges are missing in
  $\Eh_i$ (only four, as one edge may be removed from $\Eh_i$ by
  $\delta$). Since there are at most $N^2_{i+1} \geq 2(N_{i+1}^2 -
  N_{i+1})$ edges in a graph over an domain of size $N_{i+1}$, the
  edge transfer is completed before thread $i$ is in charge.

\end{proofsketch}

\section{Conclusion and Future Work}
\progress{60}
\label{sec:conclusion}

In this article we showed that Reachability can be maintained in
$\DynFO$ and thereby confirmed the conjecture of Patnaik and Immerman
\cite{PatnaikI97}. The proof adapts and combines several known
techniques in a surprisingly elementary way. One of the key
ingredients, the maintainability of the rank of a matrix with
first-order update formulas, is of independent interest. As an
immediate consequence of those results, regular path queries and
2-satisfiability can also be maintained in $\DynFO$. By combining the
linear algebraic part of the proof that Reachability is in $\DynFO$
with the Isolation Lemma, we showed how the size of a maximum matching
can be maintained in $\DynFO$ with non-uniform initialisation. 

Reachability is arguably one of the most important algorithmic
problems in Computer Science, and algorithms for solving Reachability
are the basis for solving many other problems. For this reason, the
open status of the maintainability question for Reachability has
stifled progress in the study of descriptive dynamic complexity
severely. The positive answer to this question raises hopes that other
areas become accessible to the methods of dynamic complexity now. One
example being query languages for graph databases, as is illustrated
by the dynamic program for maintaining regular path queries. Other
potential candidate areas are dynamic model checking and query
evaluation under ontologies. However, the \DynFO bound for Reachability 
 does not extend to all of \NL, simply because \DynFO is not known to
 be closed under \emph{unbounded} first-order 
 reductions. 

Also the basic techniques used for maintaining Reachability are promising further progress. Linear algebraic problems, such as the Rank problem, have thus far been neglected in the study of dynamic complexity. Also techniques known from the study of small static complexity classes have not been systematically tested in the dynamic framework. The application of the Isolation Lemma presented here indicates that this might be worthwhile. 

We plan to explore those techniques by trying to apply them to related problems such as maintaining a reachability witness, the (shortest) distance, the number of paths, whether there is a matching (and witnesses for that), the value of the determinant, and disjoint paths.

Another interesting direction for future research is to explore whether the dynamic programs for maintaining Reachability, Rank, and the size of a maximum matching presented here can be generalised and optimised. We indicate some intriguing challenges in the following.

In this article, we only looked at modifications of a single tuple. A
closer analysis reveals that the dynamic program for Rank still works
when whole columns can be modified, and therefore Reachability can be
maintained when all incoming edges of one node can be modified at once
(dually: all outgoing edges)\footnote{This observation is due to
  William Hesse.}. Whether Reachability and Rank
can be maintained for other, more complex modifications remains
open. We note that this is closely related to the question which
fragments of transitive closure logic can be maintained by first-order
updates.  

It would also be worthwhile to study whether Reachability (as well
as the other problems studied here) can be maintained in fragments of
$\DynFO$. Typical fragments limit the arity of the auxiliary relations
or the syntactic shape of update formulas.  The dynamic programs
presented here have very high arity, which makes them hard to apply in
practical scenarios. It remains open whether Reachability can be
maintained with auxiliary relations of small arity. So far it is
only known that Reachability cannot be maintained using unary
auxiliary relations. Another interesting question is whether
Reachability can be maintained by even weaker update mechanisms,
e.g. \NCz-updates. Lower bounds for this fragment are conceivable;
yet, even for the quantifier-free fragment of $\DynFO$, which
corresponds to restricted $\NCz$-updates, lower bounds are
nontrivial. It is only known that  binary auxiliary relations are not
sufficient to maintain Reachability in this fragment of $\DynFO$
\cite{ZeumeS15}.

\section{Acknowledgements}
\progress{60}
We thank William Hesse for stimulating and illuminating discussions. We are grateful to an anonymous reviewer for pointing us to the dynamic algorithm for maintaining rank by Frandsen and Frandsen \cite{FrandsenF09}. This simplified the presentation enormously. Further we thank Nils Vortmeier for many very helpful comments on drafts of this article. The first and the third authors were partially funded by a grant from Infosys Foundation. The last two authors acknowledge the financial support by DFG grant \mbox{SCHW 678/6-1}

   \bibliographystyle{plain}
  \bibliography{dyn}

\end{document}